\begin{document}

\newtheorem{theo}{Theorem}[section]
\newtheorem{lemme}[theo]{Lemma}
\newtheorem{cor}[theo]{Corollary}
\newtheorem{defi}[theo]{Definition}
\newtheorem{prop}{Proposition}
\newtheorem{problem}[theo]{Problem}
\newtheorem{remarque}[theo]{Remark}
\newtheorem{claim}[theo]{Claim}
\newcommand{\beq}{\begin{eqnarray}}
\newcommand{\enq}{\end{eqnarray}}
\newcommand{\be}{\begin{eqnarray*}}
	\newcommand{\en}{\end{eqnarray*}}
\newcommand{\ben}{\begin{eqnarray*}}
	\newcommand{\enn}{\end{eqnarray*}}
\newcommand{\Td}{\mathbb T^d}
\newcommand{\Rd}{\mathbb R^n}
\newcommand{\R}{\mathbb R}
\newcommand{\N}{\mathbb N}
\newcommand{\Sn}{\mathbb S}
\newcommand{\Zd}{\mathbb Z^d}
\newcommand{\Linf}{L^{\infty}}
\newcommand{\dt}{\partial_t}
\newcommand{\Dt}{\frac{d}{dt}}
\newcommand{\Dtt}{\frac{d^2}{dt^2}}
\newcommand{\demi}{\frac{1}{2}}
\newcommand{\vf}{\varphi}
\newcommand{\epu}{_{\epsilon}}
\newcommand{\ep}{^{\epsilon}}
\newcommand{\bfi}{{\mathbf \Phi}}
\newcommand{\bpsi}{{\mathbf \Psi}}
\newcommand{\bx}{{\mathbf x}}
\newcommand{\bX}{{\mathbf X}}

\newcommand{\ds}{\displaystyle}
\newcommand {\g}{\`}
\newcommand{\E}{\mathbb E}
\newcommand{\Q}{\mathbb Q}
\newcommand{\PP}{\mathbb P}
\newcommand{\1}{\mathbb I}
\let\cal=\mathcal
\newcommand{\indicator}[1]{\,\mathbf{1}_{#1}}

\title{Robustness of mathematical models and technical analysis strategies}
\maketitle
\begin{center}
Ahmed Bel Hadj Ayed\footnotemark[1]$^{,}$\footnotemark[2], Gr\'egoire Loeper \footnotemark[3], Fr\'ed\'eric Abergel \footnotemark[1] 
\end{center}
\begin{abstract}
The aim of this paper is to compare the performances of the optimal strategy under parameters mis-specification and of a technical analysis trading strategy. The setting we consider is that of a stochastic asset price model where the trend follows an unobservable Ornstein-Uhlenbeck process. For both strategies, we provide the asymptotic expectation of the logarithmic return as a function of the model parameters. Finally, numerical examples find that an investment strategy using the cross moving averages rule is more robust than the optimal strategy under parameters mis-specification.
\end{abstract}
\footnotetext[1]{Chaire of quantitative finance, laboratory MAS, CentraleSup\'elec}
\footnotetext[2]{BNP Paribas Global Markets}
\footnotetext[3]{Monash University}

\section*{Introduction}
There exist three principal approaches for investments in financial markets (see \cite{TalayTechnical}).
The first one is based on fundamental economic principles (see \cite{economic} for details). The second one is called the technical analysis approach and uses the historical prices and volumes (see \cite{tec1},\cite{tec2} and \cite{tec3} for details).
The third one is the use of mathematical models and was introduced in \cite{Merton}. He assumed that the risky asset follows a geometric Brownian motion and derived the optimal investment rules for an investor maximizing his expected utility function. Several generalisations of this problem are possible (see \cite{Karatzas}, \cite{Brendle}, \cite{LaknerStrat}, \cite{Sass}, or  \cite{rieder} for example) but all these models are confronted to the calibration problem. In \cite{Ahmed}, the authors assess the feasibility of forecasting trends modeled by an unobserved mean-reverting diffusion. They show that, due to a weak signal-to-noise ratio, a bad calibration is very likely. Using the same risky asset model, \cite{Zhou} analyse the performance of a technical analysis strategy based on a geometric moving average rule.  
In  \cite{TalayTechnical}, the authors assume that the drift is an unobservable constant piecewise process jumping at an unknown time. They provide the performance of the optimal trading strategy under parameters mis-specification and compare this strategy to a technical analysis investment based on a simple moving average rule with Monte Carlo simulations. 

In this paper, we consider a stochastic asset price model where the trend is an unobservable Ornstein Uhlenbeck process. The purpose of this work is to characterize and to compare the performances of the optimal strategy under parameters mis-specification and of a cross moving average strategy.

The paper is organized as follows: the first section presents the model, recalls some results from filtering theory and rewrites the Kalman filter estimator as a corrected exponential average.

In the second section, the optimal trading strategy under parameters mis-specification is investigated. For this portfolio, the stochastic differential equation of the logarithmic return is found. Using this result, we provide, in closed form, the asymptotic expectation of the logarithmic return as a function of the signal-to-noise-ratio and of the trend mean reversion speed. We close this section by giving conditions on the model and the strategy parameters that guarantee a positive asymptotic expected logarithmic return and the existence of an optimal duration.

In the third section, we consider a cross moving average strategy. For this portfolio, we also provide the stochastic differential equation of the logarithmic return. We close this section by giving, in closed form, the asymptotic expectation of the logarithmic return as a function of the model parameters.

In the fourth section, numerical examples are performed. First, the best durations of the Kalman filter and of the optimal strategy under parameters  mis-specification are illustrated over several trend regimes. We then compare the performances of a cross moving average strategy and of a classical optimal strategy used in the industry (with a duration $\tau=1$ year) over several theoretical regimes. We also compare these performances under Heston's stochastic volatility model using Monte Carlo simulations. These examples show that the technical analysis approach is more robust than the optimal strategy under parameters mi-specification. We close this study by confirming this conclusion with empirical tests based on real data.

\section{Setup}
This section begins by presenting the model, which corresponds to an unobserved mean-reverting diffusion. After that, we reformulate this model in a completely observable environment (see \cite{Lipster1} for details). This setting introduces the conditional expectation of the trend, knowing the past observations. Then, we recall the asymptotic continuous time limit of the Kalman filter and we rewrite this estimator as a corrected exponential average.
\subsection{The model}
Consider a financial market living on a stochastic basis $( \Omega , \mathcal{F} ,\mathbf{F}, \mathbb{P} )$, where  $\mathbf{F}=\left\lbrace  \mathcal{F}_{t}, t \geqslant 0 \right\rbrace$ is the natural filtration associated to a two-dimensional (uncorrelated) Wiener process $(W^{S},W^{\mu})$, and $\mathbb{P}$ is the objective probability measure. The dynamics of the risky asset $S$ is given by
\begin{eqnarray}\label{Model3}
\frac{dS_{t}}{S_{t}}&=&\mu_{t}dt+\sigma_{S} dW_{t}^{S},\\
d\mu_{t}&=&-\lambda\mu_{t}dt+\sigma_{\mu}dW_{t}^{\mu}\label{Model23},
\end{eqnarray}
with $\mu_{0}=0$. We also assume that $\left( \lambda,\sigma_{\mu},\sigma_{S}\right) \in  \mathbb{R}_+^{*}\times\mathbb{R}_+^{*}\times\mathbb{R}_+^{*}$. The parameter $\lambda$ is called the trend mean reversion speed. Indeed, $\lambda$ can be seen as the "force" that pulls the trend back to zero. 
Denote by $\mathbf{F}^{S}=\left\lbrace  \mathcal{F}_{t}^S \right\rbrace$ be the natural filtration associated to the price process $S$. An important point is that only $\mathbf{F}^{S}$-adapted processes are observable, which implies that agents in this market do not observe the trend $\mu$.
\subsection{The observable framework}
As stated above, the agents can only observe the stock price process $S$. Since, the trend $\mu$ is not $F^S$-measurable, the agents do not observe it directly. Indeed, the model \hyperref[Model3]{(1)-(2)} corresponds to a system with partial information.  The following proposition gives a representation of the model \hyperref[Model3]{(1)-(2)} in an observable framework (see \cite{Lipster1} for details). 
\begin{prop}\label{ObservableFramework3}
The dynamics of the risky asset $S$ is also given by
\begin{eqnarray}\label{ObservableSDE3}
\frac{dS_t}{S_t} & = & E\left[  \mu_t | \mathcal{F}^S_t\right] dt + \sigma_{S} dN_t,
\end{eqnarray}
where $N$ is a $\left( \mathbb{P} ,\mathbf{F}^{S} \right)$ Wiener process.
\end{prop}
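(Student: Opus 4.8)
The plan is to realize $N$ as the \emph{innovations process} associated to the filtering problem and to identify it as a Brownian motion through Lévy's characterization. Write $\hat\mu_t := E[\mu_t\,|\,\mathcal{F}^S_t]$. This is a well-defined $\mathbf{F}^S$-progressively measurable process with $E\int_0^t|\hat\mu_u|\,du<\infty$ for every $t$, because $\mu$ is a Gaussian Ornstein--Uhlenbeck process whose $L^2$-norm is bounded on compact time intervals and conditional expectation is an $L^2$-contraction (these measurability and integrability facts are part of the filtering setup of \cite{Lipster1}). Define
\[
N_t \;:=\; \frac{1}{\sigma_S}\left( \int_0^t \frac{dS_u}{S_u} - \int_0^t \hat\mu_u\,du \right),
\]
so that, by \eqref{Model3}, $N_t = W^S_t + \frac{1}{\sigma_S}\int_0^t(\mu_u - \hat\mu_u)\,du$. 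By construction $N$ is continuous, $N_0=0$, $\mathbf{F}^S$-adapted (both $\int_0^\cdot dS_u/S_u$ and $\hat\mu$ are), and integrable on every $[0,t]$.

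Next I would show that $N$ is a $\left(\mathbb{P},\mathbf{F}^S\right)$-martingale. Fix $0\le s\le t$. Since the integrand of $\int_0^\cdot \sigma_S\,dW^S_u$ is constant, that stochastic integral is a genuine martingale, hence
\[
E\!\left[\int_s^t \frac{dS_u}{S_u}\,\Big|\,\mathcal{F}^S_s\right] \;=\; E\!\left[\int_s^t \mu_u\,du\,\Big|\,\mathcal{F}^S_s\right].
\]
A conditional Fubini argument (legitimate by the integrability noted above), together with the tower property and the inclusion $\mathcal{F}^S_s\subset\mathcal{F}^S_u$ for $u\ge s$, gives
\[
E\!\left[\int_s^t \mu_u\,du\,\Big|\,\mathcal{F}^S_s\right] = \int_s^t E[\mu_u\,|\,\mathcal{F}^S_s]\,du = \int_s^t E\big[\hat\mu_u\,\big|\,\mathcal{F}^S_s\big]\,du = E\!\left[\int_s^t \hat\mu_u\,du\,\Big|\,\mathcal{F}^S_s\right].
\]
Subtracting yields $E[N_t-N_s\,|\,\mathcal{F}^S_s]=0$, so $N$ is a martingale.

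Finally, from $N_t = W^S_t + \frac{1}{\sigma_S}\int_0^t(\mu_u-\hat\mu_u)\,du$ the second term has finite variation, so it does not contribute to the quadratic variation and $\langle N\rangle_t = \langle W^S\rangle_t = t$. Lévy's characterization theorem then shows that $N$ is an $\mathbf{F}^S$-Wiener process, and rearranging the definition of $N$ produces exactly \eqref{ObservableSDE3}. The main obstacle is the martingale step: one must be sure that the stochastic-integral part of $\int_s^t dS_u/S_u$ is a true (not merely local) martingale and justify the exchange of conditional expectation with the time integral, both of which rely on the $L^1$-bounds coming from the Gaussianity of $\mu$; granting those integrability facts, the remainder is routine. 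Alternatively, the statement is a direct instance of the general innovations theorem of \cite{Lipster1}, which one may simply invoke.
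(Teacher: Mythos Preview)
Your argument is correct: it is precisely the standard innovations-process proof (define $N$, verify the $\mathbf{F}^S$-martingale property via conditional Fubini and the tower rule, read off $\langle N\rangle_t=t$ from the finite-variation correction, and apply L\'evy's characterization), and you rightly flag the integrability checks needed to make the conditional-Fubini and true-martingale steps rigorous. The paper itself does not give a proof of this proposition; it simply refers the reader to \cite{Lipster1}, where exactly this innovations argument is carried out in full generality. So your approach coincides with the cited source, only made explicit---and your closing remark that one may directly invoke the innovations theorem of \cite{Lipster1} is in fact all the paper does.
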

\begin{remarque}
In the filtering theory (see \cite{Lipster1} for details), the process $N$ is called the innovation process. To understand this name, note that:
\begin{eqnarray*}
dN_t  = \frac{1}{\sigma_{S}} \left(  \frac{dS_t}{S_t} - E\left[  \mu_t | \mathcal{F}^S_t\right] dt \right) .
\end{eqnarray*}
Then, $dN_t$ represents the difference between the current observation and what we expect knowing the past observations. 
\end{remarque}
\subsection{Optimal trend estimator} 
The system \hyperref[Model3]{(1)-(2)} corresponds to a Linear Gaussian Space State model (see \cite{ARMABook} for details). In this case, the Kalman filter gives the optimal estimator, which corresponds to the conditional expectation $E\left[  \mu_t | \mathcal{F}^S_t\right]$. Since $\left( \lambda,\sigma_{\mu},\sigma_{S}\right) \in  \mathbb{R}_+^{*}\times\mathbb{R}_+^{*}\times\mathbb{R}_+^{*}$, the model \hyperref[Model3]{(1)-(2)} is a controllable and observable time invariant system. In this case, it is well known  that the estimation error variance converges to an unique constant value (see \cite{KalmanBucy} for details). This corresponds to the steady-state Kalman filter. The following proposition (see \cite{Ahmed} for a proof) gives a first continuous representation of the steady-state Kalman filter:
\begin{prop}
%The conditional expectation $E\left[  \mu_t | \mathcal{F}^S_t\right]$ has an asymptotic continuous time limit depending on the asset returns:
The steady-state Kalman filter has a continuous time limit depending on the asset returns:
\begin{eqnarray}\label{ContinuousEstimate3}
d \widehat{\mu}_{t}=-\lambda \beta \widehat{\mu}_{t}dt+\lambda\left( \beta -1 \right) \frac{dS_{t}}{S_{t}}, 
\end{eqnarray}
where
\begin{eqnarray}\label{BetaDef3}
\beta= \left( 1+ \frac{\sigma_{\mu}^{2}}{ \lambda^{2} \sigma_{S}^{2}}  \right)^{\frac{1}{2}}.
\end{eqnarray}
\end{prop}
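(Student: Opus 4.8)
The plan is to derive the continuous-time steady-state Kalman filter directly from the classical Kalman–Bucy equations for the linear Gaussian system \hyperref[Model3]{(1)-(2)}, and then algebraically identify the steady-state gain in terms of the parameter $\beta$ defined in \eqref{BetaDef3}. First I would write down the Kalman–Bucy filter for the state $\mu_t$ observed through $dS_t/S_t = \mu_t\,dt + \sigma_S\,dN_t$: the conditional mean $\widehat{\mu}_t = E\left[\mu_t\mid\mathcal F^S_t\right]$ satisfies
\begin{eqnarray*}
d\widehat{\mu}_t = -\lambda\widehat{\mu}_t\,dt + \frac{\gamma_t}{\sigma_S^2}\left(\frac{dS_t}{S_t} - \widehat{\mu}_t\,dt\right),
\end{eqnarray*}
where $\gamma_t = E\left[(\mu_t-\widehat{\mu}_t)^2\mid\mathcal F^S_t\right]$ is the (deterministic) error variance, which solves the scalar Riccati ODE
\begin{eqnarray*}
\dot{\gamma}_t = -2\lambda\gamma_t + \sigma_\mu^2 - \frac{\gamma_t^2}{\sigma_S^2}.
\end{eqnarray*}

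Next I would invoke controllability and observability of the time-invariant system — guaranteed by $\left(\lambda,\sigma_\mu,\sigma_S\right)\in\mathbb R_+^*\times\mathbb R_+^*\times\mathbb R_+^*$, as noted before the proposition and established in \cite{KalmanBucy} — to conclude that $\gamma_t$ converges to the unique positive root $\gamma_\infty$ of the algebraic Riccati equation $0 = -2\lambda\gamma_\infty + \sigma_\mu^2 - \gamma_\infty^2/\sigma_S^2$. Solving this quadratic gives $\gamma_\infty = \lambda\sigma_S^2\left(-1 + \left(1 + \sigma_\mu^2/(\lambda^2\sigma_S^2)\right)^{1/2}\right) = \lambda\sigma_S^2(\beta - 1)$ with $\beta$ as in \eqref{BetaDef3}. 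Substituting the steady-state gain $\gamma_\infty/\sigma_S^2 = \lambda(\beta-1)$ into the filter equation yields
\begin{eqnarray*}
d\widehat{\mu}_t = -\lambda\widehat{\mu}_t\,dt + \lambda(\beta-1)\left(\frac{dS_t}{S_t} - \widehat{\mu}_t\,dt\right) = -\lambda\left(1 + (\beta-1)\right)\widehat{\mu}_t\,dt + \lambda(\beta-1)\frac{dS_t}{S_t},
\end{eqnarray*}
and since $1 + (\beta-1) = \beta$ this is exactly \eqref{ContinuousEstimate3}.

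The routine parts are the quadratic solve and the bookkeeping of which root is positive (the other root is negative since the product of the roots is $-\sigma_\mu^2\sigma_S^2 < 0$, so positivity of $\gamma_\infty$ selects it uniquely). The one genuinely delicate point — and the step I would expect to require the most care — is justifying the passage to the steady state: namely that the finite-horizon filter with its transient error variance $\gamma_t$ actually has a well-defined $t\to\infty$ limit and that one is entitled to replace $\gamma_t$ by $\gamma_\infty$ in the stochastic differential equation. This is precisely the content of the steady-state Kalman–Bucy theory for detectable/stabilizable systems, so I would cite \cite{KalmanBucy} (and \cite{Ahmed} for this exact model) rather than reprove convergence of the Riccati flow; the hypothesis $\lambda > 0$ makes the drift matrix already stable, which makes the convergence argument especially transparent. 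It is also worth remarking that $\widehat{\mu}_0 = E[\mu_0] = 0$ is consistent with $\mu_0 = 0$, so no initial-condition mismatch arises.
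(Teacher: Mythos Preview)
Your derivation is correct. Note, however, that the paper does not actually supply a proof of this proposition: it simply cites \cite{Ahmed} for the argument. Your approach --- writing the Kalman--Bucy filter and Riccati equation for the pair \hyperref[Model3]{(1)-(2)}, solving the algebraic Riccati equation for the steady-state error variance $\gamma_\infty=\lambda\sigma_S^2(\beta-1)$, and substituting the resulting gain back into the filter SDE --- is exactly the standard route and is almost certainly what the cited reference contains, so there is nothing to compare against here.
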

The steady-state Kalman filter can also be re-written as a corrected exponential average:
\begin{prop}\label{MMEproposition}
\begin{eqnarray}\label{MmeRepresentation}
\widehat{\mu}_{t}= m^*\tilde{\mu}_{t}^*,
\end{eqnarray}
where $m^*=\frac{\beta-1}{\beta}$ and $\tilde{\mu}^*$ is the exponential average given by:
\begin{eqnarray}\label{MMESDE}
d \tilde{\mu}_{t}^*=-\frac{1}{\tau^{*}} \tilde{\mu}_{t}^*dt+\frac{1}{\tau^{*}}\frac{dS_{t}}{S_{t}},
\end{eqnarray}
with an average duration $\tau^{*}=\frac{1}{\lambda\beta}$. 
\end{prop}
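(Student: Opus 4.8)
The plan is to obtain \eqref{MmeRepresentation}--\eqref{MMESDE} by a direct change of variables in the linear stochastic differential equation \eqref{ContinuousEstimate3}. First I would record that, since $\left(\lambda,\sigma_\mu,\sigma_S\right)\in\mathbb{R}_+^{*}\times\mathbb{R}_+^{*}\times\mathbb{R}_+^{*}$, formula \eqref{BetaDef3} gives $\beta>1$, so that $m^{*}=\frac{\beta-1}{\beta}\in(0,1)$ is well defined and nonzero and $\tau^{*}=\frac{1}{\lambda\beta}>0$. The equation \eqref{MMESDE} is linear with constant coefficients driven by $\frac{dS_t}{S_t}$, hence it admits a unique (strong) solution $\tilde\mu^{*}$ once an initial value is prescribed; consistently with $\mu_0=0$ we take $\widehat\mu_0=\tilde\mu^{*}_0=0$.

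Next I would set $\nu_t:=m^{*}\tilde\mu^{*}_t$ and compute its dynamics from \eqref{MMESDE} by multiplying that equation by the constant $m^{*}$:
\begin{eqnarray*}
d\nu_t=-\frac{1}{\tau^{*}}\,\nu_t\,dt+\frac{m^{*}}{\tau^{*}}\,\frac{dS_t}{S_t}.
\end{eqnarray*}
Then I would substitute the two constants $\frac{1}{\tau^{*}}=\lambda\beta$ and $\frac{m^{*}}{\tau^{*}}=\frac{\beta-1}{\beta}\,\lambda\beta=\lambda(\beta-1)$, which turns this into
\begin{eqnarray*}
d\nu_t=-\lambda\beta\,\nu_t\,dt+\lambda(\beta-1)\,\frac{dS_t}{S_t},\qquad \nu_0=0,
\end{eqnarray*}
i.e. exactly \eqref{ContinuousEstimate3} with the same initial condition. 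By uniqueness of the solution of the linear SDE \eqref{ContinuousEstimate3}, it follows that $\widehat\mu_t=\nu_t=m^{*}\tilde\mu^{*}_t$ for all $t\ge 0$, which is \eqref{MmeRepresentation}. Equivalently, one may read the computation in reverse: inserting the ansatz $\widehat\mu=m^{*}\tilde\mu^{*}$ into \eqref{ContinuousEstimate3} and dividing by $m^{*}$ produces precisely \eqref{MMESDE}, so the two descriptions define the same process.

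There is no genuine obstacle in this argument: it is a one-line rescaling of a scalar linear SDE. The only points requiring a little care are the bookkeeping of the constants — checking that $\frac{m^{*}}{\tau^{*}}$ collapses to $\lambda(\beta-1)$ and that $\frac{1}{\tau^{*}}$ equals $\lambda\beta$ — and matching the initial conditions of \eqref{ContinuousEstimate3} and \eqref{MMESDE} so that the identity holds pathwise rather than merely at the level of the dynamics.
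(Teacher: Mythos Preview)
Your argument is correct. The paper actually states this proposition without proof, since it is an immediate reparametrisation of \eqref{ContinuousEstimate3}; your rescaling of the linear SDE by the constant $m^{*}=\frac{\beta-1}{\beta}$, together with the identifications $\frac{1}{\tau^{*}}=\lambda\beta$ and $\frac{m^{*}}{\tau^{*}}=\lambda(\beta-1)$ and the matching initial conditions, is exactly the intended one-line check.
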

\section{Optimal strategy under parameters mis-specification}\label{optimaStrat}
In this section, we consider the optimal trading strategy under parameters mis-specification. For this portfolio, we first give the stochastic differential equation of the logarithmic return and we provide, 
in closed form, the asymptotic expectation of the logarithmic return.
\subsection{Context}
Consider the financial market defined in the first section with a risk free rate and without transaction costs.
Let $P$ be a self financing portfolio given by:
\begin{eqnarray*}
\frac{dP_{t}}{P_{t}}&=& \omega_t \frac{dS_{t}}{S_{t}},\\
P_0&=& x,
\end{eqnarray*}
where $\omega_t$ is the fraction of wealth invested in the risky asset (also named the control variable). The agent aims to maximize his expected logarithmic utility on an admissible domain $\mathcal{A}$ for the allocation process. In this section, we assume that the agent is not able to observe the trend $\mu$. Formally, $\mathcal{A}$ represents all the $\mathbf{F}^{S}$-progressive and measurable processes and the problem is:  
\begin{eqnarray*}
\omega^{*} = \arg\sup_{\omega \in \mathcal{A}} \mathbb{E} \left[\ln\left(P_{t} \right)| P_0=x \right].
\end{eqnarray*}
The solution of this problem is well known and easy to compute (see \cite{LaknerStrat} for example). Indeed, it has the following form:
\begin{eqnarray*}
\omega^{*}_t = \frac{E\left[  \mu_t | \mathcal{F}^S_t\right]}{\sigma_{S}^{2}}.
\end{eqnarray*}
In practice, the parameters are unknown and must be estimated. In \cite{Ahmed}, the authors assess the feasibility of forecasting trends modeled by an unobserved mean-reverting diffusion. They show that, due to a weak signal-to-noise ratio,  a bad calibration is very likely. Using Proposition \ref{MMEproposition}, the steady state Kalman filter is a corrected exponential moving average of past returns. Therefore, a mis-specification on the parameters $\left( \lambda,\sigma_{\mu}\right)$ is equivalent to a mis-specification on the factor $\frac{\beta-1}{\beta}$ and on the duration $\tau^{*}$. 

Suppose that an agent thinks that the optimal duration is $\tau$ and considers:
\begin{eqnarray}\label{MissSpecifiedFilter}
d \tilde{\mu}_{t}&=&-\frac{1}{\tau} \tilde{\mu}_{t}dt+\frac{1}{\tau}\frac{dS_{t}}{S_{t}},\\
\tilde{\mu}_{0}&=&0.
\end{eqnarray}
Using this estimator, the agent will invest following:
\begin{eqnarray}\label{MisSpecifiedSelfFinancing}
\frac{dP_{t}}{P_{t}}&=&m \frac{\tilde{\mu}_{t}}{\sigma_{S}^{2}}  \frac{dS_{t}}{S_{t}},\\
P_{0}&=& x,
\end{eqnarray}
where $m>0$. The following lemma gives the law of this filter $\tilde{\mu}$:
\begin{lemme}\label{LawMisSpecifiedFilter}
The exponential moving average of Equation (\ref{MissSpecifiedFilter}) is given by:
\begin{eqnarray}\label{SolutionMisFilter}
\tilde{\mu}_{t}=\frac{e^{-\frac{t}{\tau}}}{\tau}\left(\int_{0}^{t}e^{\frac{s}{\tau}}\mu_sds+\sigma_S\int_{0}^{t}e^{\frac{s}{\tau}}dW^S_s \right). 
\end{eqnarray}
Moreover, this filter is a centered Gaussian process, whose variance is:
\begin{eqnarray*}
&&\mathbb{V}\left[\tilde{\mu}_{t}\right] =  \frac{\sigma_S^2} {2\tau}\left(1-e^{\frac{-2t}{\tau}} \right) 
+\frac{\sigma_{\mu}^2} {\tau^2\lambda\left(\frac{1}{\tau}-\lambda \right) }\left( \tau\frac{e^{\frac{-2t}{\tau}}-1  }{2}\right. \\
&&\left.+\frac{1-e^{-t\left( \lambda+\frac{1}{\tau}\right) }}{\frac{1}{\tau}+\lambda}+ \frac{2e^{-t\left( \lambda+\frac{3}{\tau}\right)}-e^{-2t\left( \lambda+\frac{1}{\tau}\right)}-e^{\frac{-4t}{\tau}} }{\frac{1}{\tau}-\lambda}\right). 
\end{eqnarray*}
\end{lemme}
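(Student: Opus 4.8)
The plan is to solve the linear SDE \eqref{MissSpecifiedFilter} explicitly by the integrating-factor method, then compute the first two moments of the resulting Gaussian process. First I would multiply \eqref{MissSpecifiedFilter} through by the integrating factor $e^{t/\tau}$, observe that the left-hand side becomes $d\bigl(e^{t/\tau}\tilde\mu_t\bigr)$, integrate from $0$ to $t$ using $\tilde\mu_0=0$, and substitute $\frac{dS_s}{S_s}=\mu_s\,ds+\sigma_S\,dW^S_s$ from \eqref{Model3}. This yields \eqref{SolutionMisFilter} directly. Since $\mu$ is itself a centered Gaussian process (it is the Ornstein--Uhlenbeck process \eqref{Model23} started at $\mu_0=0$, so $\mu_s=\sigma_\mu\int_0^s e^{-\lambda(s-u)}\,dW^\mu_u$) and $W^S$ is an independent Wiener process, the stochastic integrals in \eqref{SolutionMisFilter} are jointly Gaussian with mean zero; hence $\tilde\mu_t$ is a centered Gaussian random variable, and the only remaining task is to compute $\mathbb{V}[\tilde\mu_t]$.

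For the variance, I would use the independence of $W^S$ and $W^\mu$ to split $\mathbb{V}[\tilde\mu_t]$ into two pieces. The $W^S$-piece is $\frac{e^{-2t/\tau}}{\tau^2}\sigma_S^2\int_0^t e^{2s/\tau}\,ds$, which by the Itô isometry evaluates to $\frac{\sigma_S^2}{2\tau}\bigl(1-e^{-2t/\tau}\bigr)$, the first term in the claimed formula. The $\mu$-piece is $\frac{e^{-2t/\tau}}{\tau^2}\int_0^t\int_0^t e^{(s+s')/\tau}\,\mathbb{E}[\mu_s\mu_{s'}]\,ds\,ds'$, where the OU covariance is $\mathbb{E}[\mu_s\mu_{s'}]=\frac{\sigma_\mu^2}{2\lambda}\bigl(e^{-\lambda|s-s'|}-e^{-\lambda(s+s')}\bigr)$ (the transient covariance starting from $0$, not the stationary one, since $\mu_0=0$). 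Plugging this in and carrying out the double integral --- splitting the region into $s<s'$ and $s>s'$ for the $e^{-\lambda|s-s'|}$ term --- produces elementary exponential integrals in the exponents $\frac1\tau\pm\lambda$, which after collecting terms should match the second bracketed expression; the factor $\frac{1}{\tau^2\lambda(\frac1\tau-\lambda)}$ and the three summands with exponents $-2t/\tau$, $-t(\lambda+\frac1\tau)$, $-t(\lambda+\frac3\tau)$, $-2t(\lambda+\frac1\tau)$, $-4t/\tau$ are exactly what such a computation generates.

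The main obstacle is purely computational: the double integral for the $\mu$-contribution is lengthy, and one must be careful to use the \emph{non-stationary} OU covariance (because $\mu_0=0$ is prescribed rather than $\mu_0$ being drawn from the invariant law), to handle the absolute value $|s-s'|$ by symmetrizing, and to bookkeep the several exponential terms without sign errors. I would organize the work by computing $\int_0^t\int_0^t e^{(s+s')/\tau}e^{-\lambda|s-s'|}\,ds\,ds'$ and $\int_0^t\int_0^t e^{(s+s')/\tau}e^{-\lambda(s+s')}\,ds\,ds'$ separately, each reducing to products or sums of one-dimensional integrals of the form $\int_0^t e^{as}\,ds=\frac{e^{at}-1}{a}$, and only at the end multiply by $\frac{\sigma_\mu^2}{2\lambda}\cdot\frac{e^{-2t/\tau}}{\tau^2}$ and simplify. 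No conceptual difficulty arises beyond this; the implicit assumption $\lambda\neq\frac1\tau$ is needed for the stated closed form (the degenerate case $\lambda=\frac1\tau$ would be recovered by a limiting argument).
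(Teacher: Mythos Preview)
Your proposal is correct and follows essentially the same route as the paper: the integrating-factor step is exactly the paper's application of It\^o's lemma to $e^{t/\tau}\tilde\mu_t$, and the variance is likewise split by independence into the $W^S$-term (It\^o isometry) and the $\mu$-term (double integral against the non-stationary OU covariance, which you and the paper write in equivalent forms). Your additional remarks on the need for the transient covariance and the implicit hypothesis $\lambda\neq 1/\tau$ are accurate and not made explicit in the paper.
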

\begin{proof}
Applying It\^o's lemma to the function $f(\tilde{\mu}_t,t)=\tilde{\mu}_{t} e^{\frac{t}{\tau}}$ and using Equation (\ref{Model3}), it follows that:
\begin{eqnarray*}
df(\tilde{\mu}_t,t)=\frac{e^{\frac{t}{\tau}}}{\tau}\left(\mu_t dt + \sigma_S dW_t^S \right). 
\end{eqnarray*}
The integral of this stochastic differential equation from $0$ to $t$ gives Equation (\ref{SolutionMisFilter}). Therefore, $\tilde{\mu}$ is a Gaussian process. Its mean is null (because $\mu_0=0$). Since $\mu$ and $W^S$ are supposed to be independent, the variance of the process $\tilde{\mu}$ is equal to the sum of $\mathbb{V}\left[ \frac{e^{-\frac{t}{\tau}}}{\tau}\int_{0}^{t}e^{\frac{s}{\tau}}\mu_sds\right] $ and $\mathbb{V}\left[\frac{e^{-\frac{t}{\tau}}}{\tau}\sigma_S\int_{0}^{t}e^{\frac{s}{\tau}}dW^S_s\right] $. The first term is computed using:
\begin{eqnarray*}
\mathbb{V}\left[ \int_{0}^{t}e^{\frac{s}{\tau}}\mu_sds\right] =\int_{0}^{t}\int_{0}^{t} e^{\frac{s_1+s_2}{\tau}}\mathbb{E}\left[\mu_{s_1}\mu_{s_2}\right] ds_1ds_2.
\end{eqnarray*}
Since $\mu$ is a centered Ornstein Uhlenbeck, for all $s,t\geq 0$, we have:
\begin{eqnarray*}
\mathbb{E}\left[\mu_{s}\mu_{t}\right]=\mathbb{C}\text{ov}\left[\mu_s,\mu_t\right] =\frac{\sigma_{\mu}^{2}}{2\lambda} e^{-\lambda\left(s+t\right) }\left(e^{2\lambda s \wedge t}-1 \right).
\end{eqnarray*} 
Finally, the second term is computed using:
\begin{eqnarray*}
\mathbb{V}\left[\int_{0}^{t}e^{ks}dW^S_s\right]=\frac{1}{2k} \left( e^{2kt}-1\right),
\end{eqnarray*}
with $k>0$.
\end{proof}
\subsection{Portfolio dynamic}
The following proposition gives the stochastic differential equation of the mis-specified optimal portfolio:
\begin{prop}\label{MisPortfolioDynamic}
Equation (\ref{MisSpecifiedSelfFinancing}) leads to:
\begin{equation}
d \ln(P_{t}) =  \frac{m\tau}{2\sigma_S^2}d \tilde{\mu}_t^2+m\left(\frac{\tilde{\mu}_t ^2}{\sigma_S^2}\left( 1-\frac{m}{2}\right)-\frac{1}{2\tau}  \right) dt.
\end{equation}
\end{prop}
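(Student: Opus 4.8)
The plan is to apply It\^o's formula twice: once to $\ln(P_t)$, to bring the logarithmic return into play, and once to $\tilde{\mu}_t^2$, to produce the quadratic differential $d\tilde{\mu}_t^2$ that appears on the right-hand side. The identity then follows by collecting the resulting drift terms.

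First I would apply It\^o's formula to $x\mapsto\ln x$ and use the self-financing equation (\ref{MisSpecifiedSelfFinancing}):
\begin{eqnarray*}
d\ln(P_t)=\frac{dP_t}{P_t}-\frac12\frac{d\langle P\rangle_t}{P_t^2},
\end{eqnarray*}
where $\frac{dP_t}{P_t}=m\frac{\tilde{\mu}_t}{\sigma_S^2}\frac{dS_t}{S_t}$ by (\ref{MisSpecifiedSelfFinancing}). Since $\frac{dS_t}{S_t}=\mu_t dt+\sigma_S dW_t^S$ has quadratic variation $\sigma_S^2\,dt$, one gets $\frac{d\langle P\rangle_t}{P_t^2}=m^2\frac{\tilde{\mu}_t^2}{\sigma_S^2}\,dt$, hence
\begin{eqnarray*}
d\ln(P_t)=m\frac{\tilde{\mu}_t}{\sigma_S^2}\frac{dS_t}{S_t}-\frac{m^2\tilde{\mu}_t^2}{2\sigma_S^2}\,dt.
\end{eqnarray*}

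Next I would eliminate $\frac{dS_t}{S_t}$ in favour of $d\tilde{\mu}_t$. Equation (\ref{MissSpecifiedFilter}) gives $\frac{dS_t}{S_t}=\tau\,d\tilde{\mu}_t+\tilde{\mu}_t\,dt$, hence $\tilde{\mu}_t\frac{dS_t}{S_t}=\tau\,\tilde{\mu}_t\,d\tilde{\mu}_t+\tilde{\mu}_t^2\,dt$. To turn $\tilde{\mu}_t\,d\tilde{\mu}_t$ into $d\tilde{\mu}_t^2$, I apply It\^o to $\tilde{\mu}_t^2$: the martingale part of $\tilde{\mu}$ has diffusion coefficient $\frac{\sigma_S}{\tau}$ (read off from (\ref{MissSpecifiedFilter}), or from the explicit form (\ref{SolutionMisFilter})), so $d\langle\tilde{\mu}\rangle_t=\frac{\sigma_S^2}{\tau^2}\,dt$ and $\tilde{\mu}_t\,d\tilde{\mu}_t=\frac12 d\tilde{\mu}_t^2-\frac{\sigma_S^2}{2\tau^2}\,dt$. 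Substituting back, $\tilde{\mu}_t\frac{dS_t}{S_t}=\frac{\tau}{2}d\tilde{\mu}_t^2-\frac{\sigma_S^2}{2\tau}\,dt+\tilde{\mu}_t^2\,dt$.

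Finally I would insert this into the expression for $d\ln(P_t)$ and group the $dt$-terms, the ones proportional to $\tilde{\mu}_t^2$ combining as $m\frac{\tilde{\mu}_t^2}{\sigma_S^2}\big(1-\frac m2\big)$:
\begin{eqnarray*}
d\ln(P_t)=\frac{m\tau}{2\sigma_S^2}d\tilde{\mu}_t^2-\frac{m}{2\tau}\,dt+\frac{m\tilde{\mu}_t^2}{\sigma_S^2}\,dt-\frac{m^2\tilde{\mu}_t^2}{2\sigma_S^2}\,dt=\frac{m\tau}{2\sigma_S^2}d\tilde{\mu}_t^2+m\Big(\frac{\tilde{\mu}_t^2}{\sigma_S^2}\Big(1-\frac m2\Big)-\frac{1}{2\tau}\Big)dt,
\end{eqnarray*}
which is the claimed identity. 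The computation presents no serious difficulty; the only point requiring care is not to double-count the two It\^o corrections — the one coming from $\ln(P_t)$ and the one coming from $\tilde{\mu}_t^2$ — and to check that the coefficient $(1-m/2)$ indeed arises from adding $+\frac{m\tilde{\mu}_t^2}{\sigma_S^2}$ (from $\tilde{\mu}_t^2\,dt$) to $-\frac{m^2\tilde{\mu}_t^2}{2\sigma_S^2}$ (the logarithmic correction).
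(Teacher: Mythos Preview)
Your proof is correct and follows essentially the same route as the paper: apply It\^o to $\ln(P_t)$, replace $\frac{dS_t}{S_t}$ by $\tau\,d\tilde{\mu}_t+\tilde{\mu}_t\,dt$ from (\ref{MissSpecifiedFilter}), and then use It\^o on $\tilde{\mu}_t^2$ to rewrite $\tilde{\mu}_t\,d\tilde{\mu}_t$. Your write-up is in fact slightly cleaner than the paper's, which cites (\ref{MmeRepresentation}) where (\ref{MissSpecifiedFilter}) is meant and omits a $dt$ in an intermediate display.
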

\begin{proof}
Equation  (\ref{MisSpecifiedSelfFinancing}) is equivalent to (by It\^o's lemma):
\begin{eqnarray*}
d \ln(P_{t})=  \frac{m\tilde{\mu}_t}{\sigma_{S}^{2}}  \frac{dS_{t}}{S_{t}} -   \frac{m^2\tilde{\mu}_t^{2}}{2\sigma_{S}^{2}}dt.
\end{eqnarray*}
Using Equation (\ref{MmeRepresentation}),
\begin{align*}
d \ln(P_{t}) =  \frac{m\tau}{\sigma_{S}^{2}} \tilde{\mu}_{t} d \tilde{\mu}_{t} +  \frac{m\tilde{\mu}_{t}^{2}}{\sigma_{S}^{2}} - \frac{1}{2} \frac{m^2\tilde{\mu}_{t}^{2}}{\sigma_{S}^{2}}dt,
\end{align*}
It\^o's lemma on Equation (\ref{MmeRepresentation}) gives:
\begin{eqnarray*}
d\tilde{\mu}_{t}^{2}=2\tilde{\mu}_{t}d\tilde{\mu}_{t}+\frac{\sigma_{S}^{2}}{\tau^2}dt.
\end{eqnarray*}
Using this equation, the dynamic of the logarithmic return follows.
\end{proof}
\begin{remarque}
Proposition \ref{MisPortfolioDynamic} shows that the returns of the optimal strategy can be broken down into two terms. The first one represents an option on the square of the realized returns (called Option profile). The second term is called the Trading Impact. These terms are introduced and discussed in \cite{Lyxor} for this strategy without considering a specific diffusion for the risky asset.
\end{remarque}
\subsection{Expected logarithmic return}
The following theorem gives the asymptotic expected logarithmic return of the mis-specified optimal strategy.
\begin{theo}\label{theoremExpMisspecified}
Consider the portfolio given by Equation (\ref{MisSpecifiedSelfFinancing}). In this case:
\begin{eqnarray}\label{AsymptoticExpMisSpecified}
\lim_{T \rightarrow \infty} \frac{\mathbb{E}\left[ \ln\left(  \frac{P_{T}}{P_{0}} \right)  \right] }{T}= m\frac{\tau\left(\beta^2-1 \right)\left(2-m \right)-m\left(\tau+\frac{1}{\lambda} \right)  }{4\tau\left(\tau+\frac{1}{\lambda} \right) },
\end{eqnarray}
where $\beta$ is given by Equation (\ref{BetaDef3}).
\end{theo}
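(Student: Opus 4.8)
The plan is to integrate the decomposition of Proposition~\ref{MisPortfolioDynamic}, take expectations, and reduce everything to the large-time behaviour of the variance $\mathbb{V}[\tilde{\mu}_t]$ supplied by Lemma~\ref{LawMisSpecifiedFilter}. Integrating the identity of Proposition~\ref{MisPortfolioDynamic} from $0$ to $T$ and using $\tilde{\mu}_0=0$ expresses $\ln(P_T/P_0)$ as the sum of the exact increment $\frac{m\tau}{2\sigma_S^2}\tilde{\mu}_T^2$ and a time integral of $\tilde{\mu}_t^2$ with deterministic coefficients; since $\mathbb{V}[\tilde\mu_t]$ is bounded on $[0,T]$, Fubini applies, and because $\tilde\mu$ is a centered Gaussian process (so $\mathbb{E}[\tilde\mu_t^2]=\mathbb{V}[\tilde\mu_t]$) one obtains
\[
\mathbb{E}\!\left[\ln\!\left(\tfrac{P_T}{P_0}\right)\right] = \frac{m\tau}{2\sigma_S^2}\,\mathbb{V}[\tilde{\mu}_T] + m\left(\frac{1-\tfrac{m}{2}}{\sigma_S^2}\int_0^T \mathbb{V}[\tilde{\mu}_t]\,dt - \frac{T}{2\tau}\right).
\]

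Next I would divide by $T$ and let $T\to\infty$. From the explicit formula in Lemma~\ref{LawMisSpecifiedFilter}, $\mathbb{V}[\tilde{\mu}_t]$ is a constant plus a finite combination of decaying exponentials, so it is bounded and converges to a finite limit $V_\infty$; hence $\frac{1}{T}\mathbb{V}[\tilde{\mu}_T]\to 0$ and, by a Ces\`{a}ro averaging argument, $\frac{1}{T}\int_0^T \mathbb{V}[\tilde{\mu}_t]\,dt \to V_\infty$. This gives
\[
\lim_{T\to\infty}\frac{\mathbb{E}\!\left[\ln(P_T/P_0)\right]}{T} = m\left(\frac{(2-m)\,V_\infty}{2\sigma_S^2} - \frac{1}{2\tau}\right),
\]
and it only remains to identify $V_\infty$ and simplify.

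Then I would let $t\to\infty$ in the variance formula of Lemma~\ref{LawMisSpecifiedFilter}: every exponential drops out, leaving
\[
V_\infty = \frac{\sigma_S^2}{2\tau} + \frac{\sigma_\mu^2}{\tau^2\lambda\big(\tfrac1\tau-\lambda\big)}\left(-\frac{\tau}{2} + \frac{1}{\tfrac1\tau+\lambda}\right),
\]
which, after reduction to a common denominator, collapses to $V_\infty = \frac{\sigma_S^2}{2\tau} + \frac{\sigma_\mu^2}{2\lambda(1+\lambda\tau)}$ (the apparent singularity at $\lambda\tau=1$ is removable, and the final formula extends there by continuity in the parameters). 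Substituting into the limit above, using $\beta^2-1=\frac{\sigma_\mu^2}{\lambda^2\sigma_S^2}$ from Equation~(\ref{BetaDef3}) so that $\frac{\sigma_\mu^2}{\lambda\sigma_S^2(1+\lambda\tau)}=\frac{\beta^2-1}{\tau+1/\lambda}$, and combining the two $1/\tau$ contributions via $\frac{m(2-m)}{4\tau}-\frac{m}{2\tau}=-\frac{m^2}{4\tau}$, produces exactly the claimed expression $m\,\frac{\tau(\beta^2-1)(2-m)-m(\tau+\tfrac1\lambda)}{4\tau(\tau+\tfrac1\lambda)}$. There is no serious analytic obstacle here: the probabilistic content is entirely contained in Proposition~\ref{MisPortfolioDynamic}, Lemma~\ref{LawMisSpecifiedFilter} and the elementary averaging step, and the main point requiring care is the bookkeeping in this last algebraic reduction (together with checking integrability so that the interchange of expectation and integration is justified).
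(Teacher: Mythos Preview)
Your proof is correct and follows essentially the same route as the paper: integrate the identity of Proposition~\ref{MisPortfolioDynamic}, take expectations via Lemma~\ref{LawMisSpecifiedFilter}, and pass to the limit in $T$. Your version is in fact more careful than the paper's sketch (you make explicit the Fubini/Ces\`aro steps and the removable singularity at $\lambda\tau=1$), but the strategy and the key inputs are the same.
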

\begin{proof}
Using Proposition \ref{MisPortfolioDynamic}, it follows that:
\begin{eqnarray*}
\mathbb{E}\left[ \ln\left(  \frac{P_{T}}{P_{0}} \right)\right] =  \frac{m\tau}{2\sigma_S^2} \mathbb{E}\left( \tilde{\mu}_T\right)^2+m\int_{0}^{T}\left(\frac{\mathbb{E} \left( \tilde{\mu}_t\right)^2\left( 2-m\right)}{2\sigma_S^2}-\frac{1}{2\tau}  \right) dt.
\end{eqnarray*}
Moreover, $\mathbb{E}\left[ \left( \tilde{\mu}_t\right)^2\right]$ is given by Lemma \ref{LawMisSpecifiedFilter}. Then, integrating the expression from $0$ to $T$ and tending $T$ to $\infty$, the result follows.
\end{proof}
The following result is a corollary of the previous theorem. It represents the asymptotic expected logarithmic return as a function of the signal-to-noise-ratio and of the trend mean reversion  speed $\lambda$.
\begin{cor}
Consider the portfolio given by Equation (\ref{MisSpecifiedSelfFinancing}). In this case:
\begin{eqnarray}\label{MisspecifiedAsymptoticExpSNR}
\lim_{T \rightarrow \infty} \frac{\mathbb{E}\left[ \ln\left(  \frac{P_{T}}{P_{0}} \right)  \right] }{T}= m\frac{2\tau\left(2-m \right)\text{SNR}-m\left(\lambda\tau+1 \right)  }{4\tau\left(\lambda\tau+1 \right) },
\end{eqnarray}
where $\text{SNR}$ is the signal-to-noise-ratio:
\begin{eqnarray}\label{SNR}
\text{SNR}=\frac{\sigma_{\mu}^{2}}{ 2\lambda \sigma_{S}^{2}}.
\end{eqnarray}
Moreover:
\begin{enumerate}
\item If $m < 2$, for a fixed parameter value $\lambda$,  this asymptotic expected logarithmic return is an increasing function of $\text{SNR}$.
\item For a fixed parameter value $\text{SNR}$, it is a decreasing function of  $\lambda$.
\end{enumerate} 
\end{cor}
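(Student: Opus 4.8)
The plan is to start from the closed-form asymptotic rate in Theorem \ref{theoremExpMisspecified} and rewrite it purely in terms of $\text{SNR}$ and $\lambda$, after which the two monotonicity claims follow by elementary differentiation. First I would use the definition (\ref{BetaDef3}) of $\beta$ together with (\ref{SNR}) to record the identity $\beta^2-1=\frac{\sigma_\mu^2}{\lambda^2\sigma_S^2}=\frac{2\,\text{SNR}}{\lambda}$, and also $\tau+\frac{1}{\lambda}=\frac{\lambda\tau+1}{\lambda}$. Substituting both into (\ref{AsymptoticExpMisSpecified}) and cancelling the common factor $\lambda$ between numerator and denominator yields exactly (\ref{MisspecifiedAsymptoticExpSNR}); this substitution is the only place where the specific form of $\beta$ is used.

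For the first monotonicity statement, I would view the right-hand side of (\ref{MisspecifiedAsymptoticExpSNR}) as an affine function of $\text{SNR}$ with $\lambda,\tau,m$ held fixed and differentiate, obtaining $\frac{\partial}{\partial\,\text{SNR}}=\frac{m(2-m)}{2(\lambda\tau+1)}$. Since $m>0$, $\tau>0$ and $\lambda>0$, this is strictly positive exactly when $m<2$, which gives the claim.

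For the second statement, it is cleanest to rewrite (\ref{MisspecifiedAsymptoticExpSNR}) as $\frac{m}{4\tau}\left(\frac{2\tau(2-m)\,\text{SNR}}{\lambda\tau+1}-m\right)$, so that the only $\lambda$-dependence is through the factor $\frac{1}{\lambda\tau+1}$, which is strictly decreasing in $\lambda$. Multiplying by the constant $\frac{m(2-m)\,\text{SNR}}{2}$ gives $\frac{\partial}{\partial\lambda}=-\frac{m(2-m)\tau\,\text{SNR}}{2(\lambda\tau+1)^2}$, which is negative in the relevant regime $0<m<2$ (the natural one, since the reference filter factor $m^*=\frac{\beta-1}{\beta}$ lies in $(0,1)$). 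I would also note that "fixing $\text{SNR}$" while varying $\lambda$ simply means implicitly adjusting $\sigma_\mu$, which is legitimate because (\ref{MisspecifiedAsymptoticExpSNR}) is written with $\text{SNR}$ and $\lambda$ as free parameters.

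The computations are entirely routine; the only mild obstacle is the algebraic bookkeeping in the substitution step, where one must check that the factors of $\lambda$ hidden inside $\beta^2-1$ and inside $\tau+\frac1\lambda$ cancel correctly, and — for the second item — the observation that the sign of the $\lambda$-derivative genuinely depends on $0<m<2$, so that part of the corollary should be read under the same assumption.
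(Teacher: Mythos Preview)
Your proposal is correct and follows exactly the paper's approach: substitute $\beta^2-1=\frac{2\,\text{SNR}}{\lambda}$ into Equation~(\ref{AsymptoticExpMisSpecified}) and simplify. In fact you go further than the paper, whose proof is a single line and does not spell out the monotonicity arguments at all; your observation that item~(2) also genuinely relies on $0<m<2$ is a valid refinement of the corollary as stated.
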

\begin{proof}
Since $\beta=\sqrt{1+\frac{2\text{SNR}}{\lambda}}$, the use of this expression in Equation (\ref{AsymptoticExpMisSpecified}) gives the result.
\end{proof}
\begin{remarque}
Assume that the agent makes a good calibration and uses  $m^*=\frac{\beta-1}{\beta}$ and $\tau^{*}=\frac{1}{\lambda\beta}$. In this case, we obtain the result of \cite{Ahmed2}:
\begin{eqnarray}
\lim_{T \rightarrow \infty} \frac{\mathbb{E}\left[ \ln\left(  \frac{P_{T}}{P_{0}} \right)  \right] }{T}&=& \frac{1}{2}\left(\text{SNR}+\lambda-\sqrt{\lambda\left(\lambda+2\text{SNR} \right) }\right)\label{AsymptoticWellExpSNR},
\end{eqnarray} 
where $\text{SNR}$ is defined in Equation (\ref{SNR}).
\end{remarque}
The following proposition gives conditions on the trend parameters and on the duration $\tau$ that guarantee a positive asymptotic expected logarithmic return and the existence of an optimal duration.
\begin{prop}\label{OptimalDurationProp}
Consider the portfolio given by Equation (\ref{MisSpecifiedSelfFinancing}) and suppose that $m < 2$.
In this case, the asymptotic expected logarithmic return is positive if and only if:
\begin{enumerate}
\item $\frac{\text{SNR}}{\lambda}>\frac{2m}{2-m}$.
\item $\tau>\tau_{\text{min}}$, where:
\begin{eqnarray}\label{tauMin}
\tau_{\text{min}}=\frac{m}{2\left( 2-m\right)\text{SNR}-\lambda m }.
\end{eqnarray}
\end{enumerate}
Moreover, there exists an optimal duration $\tau_{\text{min}}<\tau_{\text{opt}}<\infty$ if and only if $\frac{\text{SNR}}{\lambda}>\frac{2m}{2-m}$ and:
\begin{eqnarray}\label{tauOpt}
\tau_{\text{opt}}=\frac{m + \sqrt{\left( 2-m\right)2m\frac{\text{SNR}}{\lambda}}  }{2\left( 2-m\right)\text{SNR}-\lambda m }.
\end{eqnarray}
\end{prop}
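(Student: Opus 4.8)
The plan is to treat the right-hand side of Equation~(\ref{MisspecifiedAsymptoticExpSNR}) as an explicit function of the single variable $\tau>0$, with $m\in(0,2)$, $\text{SNR}>0$ and $\lambda>0$ held fixed, and then to run an elementary one-variable calculus argument. Write $g(\tau)$ for that asymptotic expected logarithmic return. Pulling out the positive constant, one has $g(\tau)=\frac{m}{4}\,\frac{A\tau-m}{\tau(\lambda\tau+1)}$, where $A:=2(2-m)\text{SNR}-\lambda m$. Since $m>0$, $\tau>0$ and $\lambda\tau+1>0$, the sign of $g(\tau)$ is the sign of the affine map $\tau\mapsto A\tau-m$, which takes the value $-m<0$ at $\tau=0$. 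Hence $g(\tau)>0$ can occur only when $A>0$, and in that case exactly for $\tau>m/A$. Translating $A>0$ into the stated inequality on $\text{SNR}/\lambda$ gives condition~(1), and $\tau_{\text{min}}=m/A$ is precisely (\ref{tauMin}); this settles the first part.

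For the optimal duration I would differentiate $g$: a direct computation gives $g'(\tau)=\frac{m}{4}\,\frac{-A\lambda\tau^2+2m\lambda\tau+m}{\bigl(\tau(\lambda\tau+1)\bigr)^2}$, so the interior critical points are the roots of $q(\tau):=A\lambda\tau^2-2m\lambda\tau-m$. When $A>0$, $q$ is an upward parabola with $q(0)=-m<0$, so it has exactly one positive root; solving it and using the identity $m\lambda+A=2(2-m)\text{SNR}$ to simplify the discriminant yields the closed form (\ref{tauOpt}) for that root, call it $\tau_{\text{opt}}$. Since $q<0$ on $(0,\tau_{\text{opt}})$ and $q>0$ on $(\tau_{\text{opt}},\infty)$, one has $g'>0$ then $g'<0$, so $\tau_{\text{opt}}$ is the unique global maximiser on $(0,\infty)$; and because the numerator of $\tau_{\text{opt}}$ exceeds $m$, we get $\tau_{\text{opt}}>m/A=\tau_{\text{min}}$, the ordering asserted.

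To finish the ``if and only if'' I must rule out an optimum when $A\le 0$. If $A=0$ then $q(\tau)=-2m\lambda\tau-m<0$ for all $\tau>0$; if $A<0$ then $q$ is a downward parabola with $q(0)=-m<0$, with negative sum of roots $2m/A$ and positive product $-m/(A\lambda)$, hence two negative roots, so again $q(\tau)<0$ for every $\tau>0$. In both sub-cases $g'>0$ on all of $(0,\infty)$, so $g$ is strictly increasing there; since moreover $g(0^+)=-\infty$ and $g(\tau)\to 0$ as $\tau\to\infty$, the function $g$ stays strictly negative and attains no maximum, so no finite optimal duration exists (and the return is never positive). Combining the two cases gives the stated equivalence.

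The computations are all routine; the only steps that need a little care are the algebra collapsing the quadratic root into the compact form (\ref{tauOpt}) via $m\lambda+A=2(2-m)\text{SNR}$, and the case split $A>0$ versus $A\le 0$ for the equivalence — in particular noticing that when $A\le 0$ the function $g$ is negative and increasing towards $0$ without ever reaching it, which is exactly what prevents an optimal duration from existing.
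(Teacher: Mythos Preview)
Your proposal is correct and follows essentially the same route as the paper: read off the sign of the explicit rational function from Equation~(\ref{MisspecifiedAsymptoticExpSNR}), then differentiate in $\tau$ and solve the resulting quadratic for the maximiser. Your treatment is in fact more complete than the paper's, which only argues existence of $\tau_{\text{opt}}$ from ``positive after $\tau_{\min}$ and tends to $0$'' and does not spell out the converse; your case split $A>0$ versus $A\le 0$, together with the sign analysis of $q(\tau)=A\lambda\tau^2-2m\lambda\tau-m$, handles the ``only if'' direction explicitly.
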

\begin{proof}
Using Equation (\ref{MisspecifiedAsymptoticExpSNR}), the first part of the proposition follows.
Since the asymptotic expected logarithmic return of the mis-specified strategy is positive after $\tau_{\text{min}}$ and tends to zero if $\tau$ tends to the infinity, there exists an optimal duration $\tau_{\text{opt}}$. This point is computed with setting to zero the derivative of Equation (\ref{MisspecifiedAsymptoticExpSNR}) with respect to the parameter $\tau$.
\end{proof}
\section{cross moving average strategy}\label{CrossAvSection}
In this section, we consider a cross moving average strategy based on geometric moving averages.  For this portfolio, we first give the stochastic differential equation of the logarithmic return and we provide, 
in closed form, the asymptotic expectation of the logarithmic return.  
\subsection{Context}
Consider the financial market defined in the first section with a risk free rate and without transaction costs. Let $G\left(t,L\right)$ be the geometric moving average at time $t$ of the stock prices on a window $L$:
\begin{eqnarray}\label{GeomAvrg}
G\left(t,L\right)=\exp\left( \frac{1}{L}\int_{t-L}^{t}\log\left(S_u \right)du \right). 
\end{eqnarray}
Let $Q$ be a self financing portfolio given by:
\begin{eqnarray}\label{TecAnalysisPortfolio}
\frac{dQ_{t}}{Q_{t}}&=& \theta_t \frac{dS_{t}}{S_{t}},\\
Q_0&=& x,
\end{eqnarray}
where $\theta_t$ is the fraction of wealth invested by the agent in the risky asset:
\begin{eqnarray*}
\theta_t =\gamma+\alpha \indicator{G\left(t,L_1\right)>G\left(t,L_2\right)}
\end{eqnarray*}
with $\gamma,\alpha \in  \mathbb{R}$ and $0<L_1<L_2<t$. This trading strategy is a combination of a fixed strategy and a pure cross moving average strategy. 
\subsection{Portfolio dynamic}
The following proposition gives the stochastic differential equation of the cross moving average portfolio.
\begin{prop}\label{TechPortfolioDynamic}
Equation (\ref{TecAnalysisPortfolio}) leads to:
\begin{eqnarray*}
d \ln(Q_{t}) &=&\left(  \left(\gamma+\alpha \indicator{G\left(t,L_1\right)>G\left(t,L_2\right)} \right)\mu_t -\frac{\gamma^2\sigma_S^2}{2}\right. \\
&&\left. -\frac{\left( \alpha^2 +2\alpha\gamma\right) \sigma_S^2}{2} \indicator{G\left(t,L_1\right)>G\left(t,L_2\right)}\right) dt\\
&&+ \left(\gamma+\alpha \indicator{G\left(t,L_1\right)>G\left(t,L_2\right)} \right)\sigma_S dW_{t}^{S}.
\end{eqnarray*}
\end{prop}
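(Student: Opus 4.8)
The plan is to apply Itô's formula to $\ln(Q_t)$ along the continuous semimartingale $Q$, exactly as in the proof of Proposition~\ref{MisPortfolioDynamic}; the only new ingredient is the algebraic simplification of the square of the allocation process $\theta_t$.

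First I would check that $\theta_t = \gamma + \alpha\,\indicator{G(t,L_1)>G(t,L_2)}$ is a bounded, $\mathbf{F}^S$-adapted process: for $t>L_2$ the geometric moving averages of Equation~(\ref{GeomAvrg}) are continuous functionals of the observed path $(S_u)_{u\leq t}$, so the event $\{G(t,L_1)>G(t,L_2)\}$ is $\mathcal{F}^S_t$-measurable. Hence the integral in Equation~(\ref{TecAnalysisPortfolio}) is well defined, and since $\frac{dS_t}{S_t} = \mu_t\,dt + \sigma_S\,dW^S_t$ is a continuous semimartingale and $\theta$ is bounded, $Q$ is a strictly positive continuous semimartingale with
\[
\frac{dQ_t}{Q_t} = \theta_t\mu_t\,dt + \theta_t\sigma_S\,dW^S_t, \qquad \frac{d\langle Q\rangle_t}{Q_t^2} = \theta_t^2\sigma_S^2\,dt.
\]

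Then Itô's lemma gives $d\ln(Q_t) = \frac{dQ_t}{Q_t} - \tfrac12\frac{d\langle Q\rangle_t}{Q_t^2} = \theta_t\mu_t\,dt + \theta_t\sigma_S\,dW^S_t - \tfrac12\theta_t^2\sigma_S^2\,dt$. The key step is to expand $\theta_t^2$. Because the indicator is idempotent ($\indicator{A}^2 = \indicator{A}$), we get $\theta_t^2 = \gamma^2 + (2\alpha\gamma + \alpha^2)\,\indicator{G(t,L_1)>G(t,L_2)}$; substituting this together with $\theta_t = \gamma + \alpha\,\indicator{G(t,L_1)>G(t,L_2)}$ into the previous display and collecting the $dt$ terms yields precisely the announced expression.

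I do not expect a genuine obstacle: the statement is an immediate consequence of Itô's formula and the idempotence of an indicator. The only points deserving a word of care are the admissibility of the strategy (the measurability of $\theta$ with respect to $\mathbf{F}^S$, which makes the self-financing portfolio well posed) and the observation that the finite-variation jump process $t\mapsto\indicator{G(t,L_1)>G(t,L_2)}$ contributes nothing to $\langle Q\rangle$ nor to the bracket with $W^S$, so that no extra terms enter the Itô expansion.
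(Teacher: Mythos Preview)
Your proposal is correct and follows exactly the paper's approach: apply It\^o's lemma to $\ln(Q_t)$ and use the idempotence $\indicator{G(t,L_1)>G(t,L_2)}^2=\indicator{G(t,L_1)>G(t,L_2)}$ to simplify $\theta_t^2$. Your additional remarks on the $\mathbf{F}^S$-measurability of $\theta$ and the absence of any contribution from the indicator to the quadratic variation are careful elaborations, but the underlying argument is the same two-line computation as in the paper.
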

\begin{proof}
Applying It\^o's lemma to the process $\ln(Q)$ and using
\begin{eqnarray*}
\indicator{G\left(t,L_1\right)>G\left(t,L_2\right)}^2=\indicator{G\left(t,L_1\right)>G\left(t,L_2\right)},
\end{eqnarray*}
Proposition \ref{TechPortfolioDynamic} follows.
\end{proof}
\subsection{Expected logarithmic return}
The following theorem gives the asymptotic expected logarithmic return of the cross moving average portfolio.
\begin{theo}\label{theoremTechAnal}
Consider the portfolio given by Equation (\ref{TecAnalysisPortfolio}). In this case:
\begin{eqnarray*}\label{AsymptoticTheoremTechAnal}
\lim_{T \rightarrow \infty} \frac{\mathbb{E}\left[ \ln\left(  \frac{Q_{T}}{Q_{0}} \right)  \right] }{T}=-\frac{\gamma^2\sigma_S^2}{2}-\frac{\left( \alpha^2 +2\alpha\gamma\right) \sigma_S^2}{2} \Phi\left(\frac{m_{\left( L_1,L_2,\sigma_S\right)}}{\sqrt{s_{\left( L_1,L_2,\lambda,\sigma_\mu,\sigma_S\right)}}}\right)\\
+ \frac{\alpha\sigma_\mu^2 \left( L_2\left(1-e^{-\lambda L_1}\right)-L_1\left(1-e^{-\lambda L_2} \right) \right)}{2 \lambda^3L_1L_2 \sqrt{s_{\left( L_1,L_2,\lambda,\sigma_\mu,\sigma_S\right)} } }   \Phi'\left(-\frac{m_{\left( L_1,L_2,\sigma_S\right)}}{\sqrt{s_{\left( L_1,L_2,\lambda,\sigma_\mu,\sigma_S\right)}}}\right),
\end{eqnarray*}
where $\Phi$ is the cumulative distribution function of the standard normal variable and:
\begin{eqnarray*}
m_{\left( L_1,L_2,\sigma_S\right)}&=&\frac{-\sigma_S^2}{4}\left( L_2-L_1\right)  ,\\
s_{\left( L_1,L_2,\lambda,\sigma_\mu,\sigma_S\right)}&=&\left(\frac{\sigma_\mu^2}{\lambda^2}+\sigma_S^2 \right)\frac{\left(L_2-L_1 \right)^2 }{3L_2} -\frac{\sigma_\mu^2}{\lambda^4}\left( \frac{1}{L_1}-\frac{1}{L_2}\right) \\
&&+\frac{\sigma_\mu^2}{\lambda^5}\left[ \frac{1}{L_1^2}\left(1-e^{-\lambda L_1} \right) +\frac{1}{L_2^2}\left(1-e^{-\lambda L_2} \right)\right. \\
&&\left. -\frac{1}{L_1L_2}\left(1-e^{-\lambda L_1} \right)\left(1-e^{-\lambda L_2} \right)\right. \\
&&\left.-\frac{1}{L_1L_2}\left(e^{-\lambda\left( L_2-L_1\right) }-e^{-\lambda\left( L_2+L_1\right) } \right) \right]. 
\end{eqnarray*}
\end{theo}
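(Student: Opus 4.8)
The plan is to start from the SDE for $d\ln(Q_t)$ given in Proposition \ref{TechPortfolioDynamic}, integrate from $0$ to $T$, take expectations, divide by $T$ and let $T\to\infty$. The deterministic constant term $-\gamma^2\sigma_S^2/2$ survives untouched; the martingale term $\left(\gamma+\alpha\indicator{G(t,L_1)>G(t,L_2)}\right)\sigma_S\,dW_t^S$ has zero expectation. So the whole computation reduces to evaluating $\lim_{T\to\infty}\frac1T\int_0^T\mathbb{E}\!\left[\left(\gamma+\alpha\indicator{G(t,L_1)>G(t,L_2)}\right)\mu_t-\tfrac{(\alpha^2+2\alpha\gamma)\sigma_S^2}{2}\indicator{G(t,L_1)>G(t,L_2)}\right]dt$. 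Since $\mathbb{E}[\mu_t]=0$ by the centered OU property, the $\gamma\mu_t$ piece drops, and I am left with two expectations: $\mathbb{E}\!\left[\indicator{G(t,L_1)>G(t,L_2)}\right]$ and $\mathbb{E}\!\left[\mu_t\indicator{G(t,L_1)>G(t,L_2)}\right]$, both of which I expect to become $t$-independent in the limit (or already for $t$ large enough that $t>L_2$ and the process has reached stationarity), so the Cesàro average just returns that stationary value.

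The key is to identify the event $\{G(t,L_1)>G(t,L_2)\}$ in Gaussian terms. From \eqref{GeomAvrg}, $\ln G(t,L)=\frac1L\int_{t-L}^t\ln S_u\,du$, and $\ln S_u$ is (up to an additive constant and drift) $\int_0^u\mu_r\,dr+\sigma_S W_u^S-\tfrac12\sigma_S^2 u$. Hence the difference $D_t:=\ln G(t,L_1)-\ln G(t,L_2)$ is a linear functional of the Gaussian pair $(\mu,W^S)$ and is therefore Gaussian; I would compute its mean $m_{(L_1,L_2,\sigma_S)}$ and variance $s_{(L_1,L_2,\lambda,\sigma_\mu,\sigma_S)}$ explicitly. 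The $-\tfrac12\sigma_S^2 u$ drift in $\ln S_u$ contributes the mean $-\tfrac{\sigma_S^2}{4}(L_2-L_1)$ after the double averaging; the variance splits, by independence of $W^\mu$ and $W^S$, into a pure-$W^S$ part giving the $\sigma_S^2(L_2-L_1)^2/(3L_2)$ term and an OU part giving all the $\sigma_\mu^2/\lambda^k$ terms, using $\mu_r=\sigma_\mu\int_0^r e^{-\lambda(r-v)}dW_v^\mu$ and Fubini to reduce everything to deterministic double integrals of exponentials. Then $\mathbb{E}[\indicator{D_t>0}]=\PP(D_t>0)=\Phi\!\left(m_{(L_1,L_2,\sigma_S)}/\sqrt{s_{(L_1,L_2,\lambda,\sigma_\mu,\sigma_S)}}\right)$ after using $\Phi(-x)=1-\Phi(x)$, matching the first bracketed term.

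For the cross term $\mathbb{E}[\mu_t\indicator{D_t>0}]$, the standard trick is: since $(\mu_t,D_t)$ is jointly Gaussian and $\mathbb{E}[\mu_t]=0$, condition on $D_t$ (or use the projection $\mu_t=\frac{\mathrm{Cov}(\mu_t,D_t)}{\mathbb{V}[D_t]}D_t+(\text{indep.\ residual})$), which yields $\mathbb{E}[\mu_t\indicator{D_t>0}]=\frac{\mathrm{Cov}(\mu_t,D_t)}{\mathbb{V}[D_t]}\,\mathbb{E}[D_t\indicator{D_t>0}]$, and the latter is a one-dimensional Gaussian first-moment-over-a-halfline computable via $\mathbb{E}[X\indicator{X>0}]=\mathbb{E}[X]\Phi(\mathbb{E}[X]/\mathrm{sd}(X))+\mathrm{sd}(X)\,\Phi'(\mathbb{E}[X]/\mathrm{sd}(X))$. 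The $\Phi$ part of this recombines with the earlier $\alpha\gamma$-type contributions, and the $\Phi'$ part produces exactly the last displayed term, provided $\mathrm{Cov}(\mu_t,D_t)=\dfrac{\sigma_\mu^2\big(L_2(1-e^{-\lambda L_1})-L_1(1-e^{-\lambda L_2})\big)}{2\lambda^3 L_1L_2}$ in the stationary regime. I expect the main obstacle to be precisely the bookkeeping in these Gaussian covariance computations — carrying out the nested time integrals $\int_{t-L}^t\!\!\int_{t-L}^t$ (and the mixed $L_1$–$L_2$ cross integral) of the OU covariance $\mathbb{E}[\mu_{s_1}\mu_{s_2}]$ recalled in the proof of Lemma \ref{LawMisSpecifiedFilter}, tracking every $\lambda^{-k}$ term, and verifying that all transient $e^{-\lambda t}$ contributions vanish as $t\to\infty$ so that the Cesàro limit is clean; the probabilistic structure itself is routine, but matching the closed form for $s_{(L_1,L_2,\lambda,\sigma_\mu,\sigma_S)}$ term by term is where the real work lies.
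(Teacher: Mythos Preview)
Your plan coincides with the paper's proof: the paper also starts from Proposition~\ref{TechPortfolioDynamic}, sets $X_t=\ln G(t,L_1)-\ln G(t,L_2)$ (your $D_t$), notes it is Gaussian, and computes $\mathbb{E}[X_t]$, $\lim_{t\to\infty}\mathbb{V}\mathrm{ar}[X_t]$ and $\mathbb{C}\mathrm{ov}[X_t,\mu_t]$ in a dedicated lemma from the OU covariance, exactly along the lines you sketch; the Gaussian identities for $\mathbb{E}[\indicator{X_t>0}]$ and $\mathbb{E}[\mu_t\indicator{X_t>0}]$ are quoted from \cite{Zhou} rather than rederived.

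There is one slip in your cross-term computation. Because $\mathbb{E}[\mu_t]=0$ but $\mathbb{E}[D_t]\neq 0$, the correct Gaussian regression is $\mu_t=\dfrac{\mathrm{Cov}(\mu_t,D_t)}{\mathbb{V}[D_t]}\bigl(D_t-\mathbb{E}[D_t]\bigr)+R$ with $R$ centred and independent of $D_t$, which gives directly
\[
\mathbb{E}\bigl[\mu_t\indicator{D_t>0}\bigr]
=\frac{\mathrm{Cov}(\mu_t,D_t)}{\mathbb{V}[D_t]}\,\mathbb{E}\bigl[(D_t-\mathbb{E}[D_t])\indicator{D_t>0}\bigr]
=\frac{\mathrm{Cov}(\mu_t,D_t)}{\sqrt{\mathbb{V}[D_t]}}\,\Phi'\!\left(-\frac{\mathbb{E}[D_t]}{\sqrt{\mathbb{V}[D_t]}}\right),
\]
a pure $\Phi'$ term. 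Your version regresses on $D_t$ itself, obtains $\mathbb{E}[D_t\indicator{D_t>0}]$, and then claims the resulting $\Phi$ part ``recombines with the earlier $\alpha\gamma$-type contributions''. It does not: that extra term $\dfrac{\mathrm{Cov}(\mu_t,D_t)}{\mathbb{V}[D_t]}\,\mathbb{E}[D_t]\,\Phi\!\bigl(\mathbb{E}[D_t]/\sqrt{\mathbb{V}[D_t]}\bigr)$ is spurious (it is exactly cancelled by the nonzero mean of the residual you dropped) and matches nothing in the statement. Once you centre the projection, the rest of your outline is the paper's proof.
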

\begin{proof}
Since the processes $\mu$ and $W^{S}$ are centered, Proposition \ref{TechPortfolioDynamic} implies that:
\begin{eqnarray*}
\mathbb{E}\left[ \ln\left(  \frac{Q_{T}}{Q_{0}} \right)  \right]&=&\frac{-\gamma^2\sigma_S^2}{2}\left( T-L_2\right) \\
&&+\alpha \int_{L_2}^{T} \mathbb{E}\left[\mu_t\indicator{G\left(t,L_1\right)>G\left(t,L_2\right)}\right] dt\\ &&-\frac{\left( \alpha^2 +2\alpha\gamma\right) \sigma_S^2}{2}\int_{L_2}^{T}\mathbb{E}\left[\indicator{G\left(t,L_1\right)>G\left(t,L_2\right)}\right] dt,
\end{eqnarray*}
where $T>L_2$. Let $t>L_2$ and consider the following process:
\begin{eqnarray}\label{Xprocess}
X_t&=&m_1\left( t\right) -m_2\left( t\right) ,
\end{eqnarray}
where $\forall i \in \left\lbrace 1,2\right\rbrace$:
\begin{eqnarray*}
m_i\left( t\right)&=& \frac{1}{L_i}\int_{t-L_i}^{t}\log\left(S_u \right)du.
\end{eqnarray*}
Then $X$ is a Gaussian process. Based on Lemma 2 in \cite{Zhou}, $\forall t>L_2$:
\begin{eqnarray}
\left\lbrace G\left(t,L_1\right)>G\left(t,L_2\right)\right\rbrace &\Leftrightarrow& \left\lbrace X_t >0\right\rbrace,\\
\mathbb{E}\left[\indicator{G\left(t,L_1\right)>G\left(t,L_2\right)}\right]&=&\Phi\left(\frac{\mathbb{E}\left[X_t\right] }{\sqrt{\mathbb{V}ar\left[X_t\right]}}\right) ,\\
\mathbb{E}\left[\mu_t\indicator{G\left(t,L_1\right)>G\left(t,L_2\right)}\right]&=&\frac{\mathbb{C}ov\left[X_t,\mu_t\right]}{\sqrt{\mathbb{V}ar\left[X_t\right]}}\Phi'\left(-\frac{\mathbb{E}\left[X_t\right] }{\sqrt{\mathbb{V}ar\left[X_t\right]}}\right)
\end{eqnarray}
The following lemma gives the mean, the asymptotic variance of the process $X$ and the covariance function between the processes $X$ and $\mu$.
\begin{lemme}\label{ProcessX}
Consider the process $X$ defined in Equation (\ref{Xprocess}). In this case, $\forall t>L_2$:
\begin{eqnarray}
\mathbb{E}\left[X_t\right]&=&\frac{-\sigma_S^2}{4}\left( L_2-L_1\right) \label{EqMoyenne} ,\\
\lim_{t \rightarrow \infty} \mathbb{V}ar\left[X_t\right]&=&s_{\left( L_1,L_2,\lambda,\sigma_\mu,\sigma_S\right)},\label{EqVar}\\
\mathbb{C}ov\left[X_t,\mu_t\right]&=& g\left(t,L_1 \right) -g\left(t,L_2 \right),\label{Cov}
\end{eqnarray}
where $s_{\left( L_1,L_2,\lambda,\sigma_\mu,\sigma_S\right)}$ is defined in Theorem \ref{theoremTechAnal} and
\begin{eqnarray}\label{FonctionCov}
g\left(t,L \right)=\frac{-\sigma_\mu^2 e^{-\lambda t}}{\lambda^2 L}\left(\lambda L + \sinh\left(\lambda\left(t-L \right)  \right)  - \sinh\left(\lambda t\right) \right). 
\end{eqnarray}
\end{lemme}
\begin{proof}[Proof of Lemma \ref{ProcessX}]
Since: 
\begin{eqnarray*}
\mathbb{E}\left[m_i\left( t\right)\right]=\frac{-\sigma_S^2}{4}\left( 2t-L_i\right),
\end{eqnarray*}
Equation (\ref{EqMoyenne}) follows. Moreover:
\begin{eqnarray*}
\mathbb{C}ov\left[m_1\left( t\right),m_2\left( t\right)\right]=\frac{1}{L_1L_2}\int_{t-L_1}^{t}\int_{t-L_2}^{t}\mathbb{C}ov\left[\ln S_u  ,\ln S_v  \right]du dv,
\end{eqnarray*}
Since
\begin{eqnarray*}
\mathbb{C}ov\left[\ln S_u  ,\ln S_v  \right]=\int_{0}^{u}\int_{0}^{v}\mathbb{C}ov\left[\mu_s  ,\mu_t  \right]ds dt+\sigma_S^2 \min\left( u,v\right),
\end{eqnarray*}
and the drift $\mu$ is an Ornstein Uhlenbeck process:
\begin{eqnarray*}
\mathbb{C}ov\left[\mu_s  ,\mu_t  \right]=\frac{\sigma_\mu^2 e^{-\lambda \left(s+t \right) }}{2\lambda}\left(e^{2\lambda \min\left( s,t\right) } -1\right). 
\end{eqnarray*}
Then
\begin{eqnarray*}
\mathbb{C}ov\left[\ln S_u  ,\ln S_v  \right]&=&\left( \sigma_S^2+\frac{\sigma_\mu^2}{\lambda^2}\right)  \min\left( u,v\right)\\
&&+\frac{\sigma_\mu^2}{2\lambda^3}\left(2e^{-\lambda u}+2e^{-\lambda v}-e^{-\lambda \left|v-u \right| }-e^{-\lambda \left(v+u \right) }-1 \right). 
\end{eqnarray*}
Using 
\begin{eqnarray*}
\mathbb{V}ar\left[X_t\right]=\mathbb{V}ar\left[m_1\left( t\right)\right] +\mathbb{V}ar\left[m_2\left( t\right)\right] -2\mathbb{C}ov\left[m_1\left( t\right),m_2\left( t\right)\right]
\end{eqnarray*}
and tending t to $\infty$ Equation (\ref{EqVar}) follows. Since the processes $W^S$ and $\mu$ are supposed to be independent, there holds:
\begin{eqnarray*}
\mathbb{C}ov\left[X_t,\mu_t\right]=\mathbb{C}ov\left[m_1\left( t\right),\mu_t\right]-\mathbb{C}ov\left[m_2\left( t\right),\mu_t\right].
\end{eqnarray*}
Moreover
\begin{eqnarray*}
\mathbb{C}ov\left[m_i\left( t\right),\mu_t\right]=\frac{1}{L_i}\int_{t-L_i}^{t}\mathbb{C}ov\left[\ln S_u,\mu_t\right]du,
\end{eqnarray*}
and 
\begin{eqnarray*}
\mathbb{C}ov\left[\ln S_u,\mu_t\right]=\int_{0}^{u}\mathbb{C}ov\left[\mu_s,\mu_t\right] ds,
\end{eqnarray*}
then 
\begin{eqnarray*}
\mathbb{C}ov\left[m_i\left( t\right),\mu_t\right]=g\left(t,L_i \right),
\end{eqnarray*}
where the function $g$ is defined in Equation (\ref{FonctionCov}). Equation (\ref{Cov}) follows
\end{proof}
The use of Lemma \ref{ProcessX} gives:
\begin{eqnarray*}
\mathbb{E}\left[ \ln\left(  \frac{Q_{T}}{Q_{0}} \right)  \right]&=&\frac{-\gamma^2\sigma_S^2}{2}\left( T-L_2\right) \\
&&+\alpha\Phi'\left(-\frac{m_{\left( L_1,L_2,\sigma_S\right)}}{\sqrt{s_{\left( L_1,L_2,\lambda,\sigma_\mu,\sigma_S\right)}}}\right) \int_{L_2}^{T}\frac{\mathbb{C}ov\left[X_t,\mu_t\right]}{\sqrt{\mathbb{V}ar\left[X_t\right]}} dt\\ 
&&-\frac{\left( \alpha^2 +2\alpha\gamma\right) \sigma_S^2}{2} \left( T-L_2\right)   \Phi\left(\frac{m_{\left( L_1,L_2,\sigma_S\right)}}{\sqrt{s_{\left( L_1,L_2,\lambda,\sigma_\mu,\sigma_S\right)}}}\right).
\end{eqnarray*}
Moreover, a direct calculus shows that:
\begin{eqnarray*}
\lim_{T \rightarrow \infty} \frac{\int_{L_2}^{T}\frac{\mathbb{C}ov\left[X_t,\mu_t\right]}{\sqrt{\mathbb{V}ar\left[X_t\right]}}dt}{T}=\frac{\sigma_\mu^2 \left(L_2\left(1-e^{-\lambda L_1} \right)-L_1\left(1-e^{-\lambda L_2} \right)  \right) }{2\lambda^3L_1L_2\sqrt{s_{\left( L_1,L_2,\lambda,\sigma_\mu,\sigma_S\right)}}}, 
\end{eqnarray*}
the result of Theorem \ref{theoremTechAnal} follows.
\end{proof}
\subsection{Strategy with one moving average}
Suppose that $L_1=0$ and $L_2=L$. In this case, the fraction of wealth invested by the agent in the risky asset becomes:
\begin{eqnarray*}
\theta_t^1 =\gamma+\alpha \indicator{S_t>G\left(t,L\right)},
\end{eqnarray*}
where $G$ is the geometric moving average defined in Equation (\ref{GeomAvrg}) and the self financing portfolio $Q^1$ becomes:
\begin{eqnarray}\label{TecAnalysisPortfolio1}
\frac{dQ_{t}^1}{Q_{t}^1}&=& \theta_t^1 \frac{dS_{t}}{S_{t}},\\
Q_0^1&=& x,
\end{eqnarray}
This particular case corresponds to the allocation introduced in \cite{Zhou} when we assume that the two Brownian motions $W^{S}$ and $W^{\mu}$ are uncorrelated and that the trend is mean reverted around $0$. Given this framework, we can provide the asymptotic expected logarithmic return of this trading strategy (which has already been found in \cite{Zhou}):
\begin{theo}\label{theoremTechAnal1}
Consider the portfolio given by Equation (\ref{TecAnalysisPortfolio1}). In this case:
\begin{eqnarray*}\label{AsymptoticTheoremTechAnal1}
\lim_{T \rightarrow \infty} \frac{\mathbb{E}\left[ \ln\left(  \frac{Q_{T}^1}{Q_{0}^1} \right)  \right] }{T}=-\frac{\gamma^2\sigma_S^2}{2}-\frac{\left( \alpha^2 +2\alpha\gamma\right) \sigma_S^2}{2} \Phi\left(\frac{m^1_{\left( L,\sigma_S\right)}}{\sqrt{s^1_{\left( L,\lambda,\sigma_\mu,\sigma_S\right)}}}\right)\\
+ \frac{\alpha\sigma_\mu^2 \frac{1-\left( 1-e^{-\lambda L}\right)}{\lambda L} }{2 \lambda^2 \sqrt{s^1_{\left(L,\lambda,\sigma_\mu,\sigma_S\right)} } }   \Phi'\left(-\frac{m^1_{\left( L,\sigma_S\right)}}{\sqrt{s^1_{\left( L,\lambda,\sigma_\mu,\sigma_S\right)}}}\right),
\end{eqnarray*}
where $\Phi$ is the cumulative distribution function of the standard normal variable and:
\begin{eqnarray*}
m^1_{\left( L,\sigma_S\right)}&=& m_{\left( 0,L,\sigma_S\right)}\\
&=&\frac{-\sigma_S^2}{4}L  ,\\
s^1_{\left(L,\lambda,\sigma_\mu,\sigma_S\right)}&=& s_{\left( 0,L,\lambda,\sigma_\mu,\sigma_S\right)}\\
&=&\frac{\left(\frac{\sigma_\mu^2}{\lambda^2}+\sigma_S^2 \right) L}{3} - \frac{\sigma_\mu^2}{2\lambda^3}\left(1-\frac{2\left( 1-e^{-\lambda L}\left(1+\lambda L \right) \right) }{\lambda^2L^2} \right), 
\end{eqnarray*}
and the functions $s$ and $m$ are introduced in Theorem \ref{theoremTechAnal}. 
\end{theo}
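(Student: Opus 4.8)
The plan is to deduce Theorem~\ref{theoremTechAnal1} from Theorem~\ref{theoremTechAnal} by letting $L_1\downarrow 0$ with $L_2=L$, since the allocation $\theta^1$ is exactly $\theta$ with $L_1=0$. The elementary fact that makes this legitimate is that, by continuity of the paths of $\log S$, $\frac{1}{L_1}\int_{t-L_1}^{t}\log(S_u)\,du\to\log(S_t)$ almost surely as $L_1\downarrow0$, so that $G(t,L_1)\to S_t$ and $\indicator{G(t,L_1)>G(t,L_2)}\to\indicator{S_t>G(t,L)}$ a.s.\ for each fixed $t$. Concretely I would either (i) pass to the limit $L_1\downarrow0$ in the closed form of Theorem~\ref{theoremTechAnal}, or (ii) replay the proof of Theorem~\ref{theoremTechAnal} verbatim with $m_1(t):=\log(S_t)$ in place of $\frac{1}{L_1}\int_{t-L_1}^{t}\log(S_u)\,du$: every ingredient — Gaussianity of $X_t=m_1(t)-m_2(t)$, the identity $\{S_t>G(t,L)\}=\{X_t>0\}$, Lemma~2 of \cite{Zhou}, and the computations of $\mathbb{E}[X_t]$, $\lim_{t\to\infty}\mathbb{V}ar[X_t]$ and $\mathbb{C}ov[X_t,\mu_t]$ — carries over unchanged. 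In either case the task reduces to evaluating the limits as $L_1\downarrow0$ of the three quantities $m_{(L_1,L_2,\sigma_S)}$, $s_{(L_1,L_2,\lambda,\sigma_\mu,\sigma_S)}$ and the covariance coefficient $\frac{\sigma_\mu^2(L_2(1-e^{-\lambda L_1})-L_1(1-e^{-\lambda L_2}))}{2\lambda^3L_1L_2\sqrt{s_{(L_1,L_2,\lambda,\sigma_\mu,\sigma_S)}}}$ appearing in Theorem~\ref{theoremTechAnal}.

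The first limit is immediate: $m_{(L_1,L,\sigma_S)}=-\tfrac{\sigma_S^2}{4}(L-L_1)\to-\tfrac{\sigma_S^2}{4}L=m^1_{(L,\sigma_S)}$. For the covariance coefficient I would substitute the first-order expansion $1-e^{-\lambda L_1}=\lambda L_1+o(L_1)$ in the numerator, which then behaves like $\sigma_\mu^2 L_1(\lambda L-(1-e^{-\lambda L}))$; the factor $L_1$ cancels the $L_1$ in the denominator, and, using $s_{(L_1,L,\lambda,\sigma_\mu,\sigma_S)}\to s^1_{(L,\lambda,\sigma_\mu,\sigma_S)}$ (proved next), the ratio tends to $\dfrac{\sigma_\mu^2\left(1-\frac{1-e^{-\lambda L}}{\lambda L}\right)}{2\lambda^2\sqrt{s^1_{(L,\lambda,\sigma_\mu,\sigma_S)}}}$, which is the coefficient displayed in the statement.

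The delicate step — and the only real obstacle I anticipate — is the limit of $s_{(L_1,L,\lambda,\sigma_\mu,\sigma_S)}$, because this expression contains the separately divergent contributions $-\frac{\sigma_\mu^2}{\lambda^4}\left(\frac{1}{L_1}-\frac{1}{L}\right)$ and $\frac{\sigma_\mu^2}{\lambda^5}\cdot\frac{1-e^{-\lambda L_1}}{L_1^2}$, each of order $1/L_1$. I would expand $1-e^{-\lambda L_1}=\lambda L_1-\tfrac{\lambda^2 L_1^2}{2}+\tfrac{\lambda^3 L_1^3}{6}+o(L_1^3)$, so that $\frac{1-e^{-\lambda L_1}}{L_1^2}=\frac{\lambda}{L_1}-\frac{\lambda^2}{2}+o(1)$; then the $\frac{\sigma_\mu^2}{\lambda^5}\cdot\frac{\lambda}{L_1}=\frac{\sigma_\mu^2}{\lambda^4 L_1}$ piece cancels exactly the $-\frac{\sigma_\mu^2}{\lambda^4 L_1}$ coming from the other term, while the remaining cross terms, which involve $\frac{1-e^{-\lambda L_1}}{L_1}\to\lambda$ and $\frac{\sinh(\lambda L_1)}{L_1}\to\lambda$ (after rewriting $e^{-\lambda(L-L_1)}-e^{-\lambda(L+L_1)}=2e^{-\lambda L}\sinh(\lambda L_1)$), are regular at $L_1=0$. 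Collecting the finite remainders — the prefactor $\frac{(L-L_1)^2}{3L}$ of the first term tends to $\frac{L}{3}$ — and reorganising the $\sigma_\mu^2$-contributions, one obtains $\left(\frac{\sigma_\mu^2}{\lambda^2}+\sigma_S^2\right)\frac{L}{3}-\frac{\sigma_\mu^2}{2\lambda^3}\left(1-\frac{2(1-e^{-\lambda L}(1+\lambda L))}{\lambda^2 L^2}\right)=s^1_{(L,\lambda,\sigma_\mu,\sigma_S)}$. Substituting $m^1$, $s^1$ and the limiting covariance coefficient into the formula of Theorem~\ref{theoremTechAnal} yields the claimed identity, which also recovers the result of \cite{Zhou}. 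If one prefers to avoid justifying the interchange of the limits $L_1\downarrow0$ and $T\to\infty$, route (ii) above produces exactly these closed forms directly, with no passage to the limit.
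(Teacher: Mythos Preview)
Your proposal is correct and follows exactly the paper's approach: the paper's proof is the single sentence ``tending $L_1$ to $0$ and using $L_2=L$, the result follows,'' and you carry out precisely this limit, supplying the detailed Taylor expansions and cancellation of the $1/L_1$ singularities in $s_{(L_1,L,\lambda,\sigma_\mu,\sigma_S)}$ that the paper leaves implicit. Your remark about avoiding the interchange of the $L_1\downarrow0$ and $T\to\infty$ limits via route~(ii) is a refinement the paper does not discuss.
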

\begin{proof}
This result is a consequence of  Theorem \ref{theoremTechAnal}. Indeed, tending $L_1$ to $0$ and using $L_2=L$, the result follows.
\end{proof}

\section{Simulations}
In this section, numerical simulations and empirical tests based on real data are performed. The aim of these tests is to compare the robustness of the optimal strategy under parameters  mis-specification and of an investment using cross moving averages. First, the best durations of the Kalman filter and of the optimal strategy under parameters  mis-specification are illustrated over several trend regimes. We then consider the asymptotic expected logarithmic returns of the cross moving average strategy (see Section \ref{CrossAvSection}) with $\left(L_1,L_2 \right)=\left(5 \text{ days},252 \text{ days}\right) $  and of the optimal strategy with a duration $\tau=252$ days. Using this configuration, we study the stability of the performances of these strategies over several theoretical regimes. We also confirm our results under Heston's stochastic volatility model with Monte Carlo simulation.
Finally, backtests of these two strategies on real data confirm our theoretical expectations.
\subsection{Optimal durations}
In this subsection, we consider the model \hyperref[Model3]{(1)-(2)}.
\subsubsection{Well-specified Kalman filter}
In these simulations, we consider a signal-to-noise ratio inferior to 1. This assumption corresponds to a trend standard deviation inferior to the volatility of the risky asset. Using $\tau^{*}=\frac{1}{\lambda\beta}$ and $\beta=\sqrt{1+\frac{2\text{SNR}}{\lambda}}$, The figures \hyperref[Figu31]{1} and \hyperref[Figu32]{2} represent the optimal Kalman filter duration $\tau^{*}$ as a function of the trend mean reversion speed $\lambda$ and of the signal-to-noise ratio. This duration is a decreasing function of these parameters. Indeed, if the variation of the trend process is  low and if the measurement noise is high compared to the trend standard deviation, the window of filtering must be long. Moreover, we observe that for a trend mean reversion speed inferior to 1 (which corresponds to a slow trend process), the duration $\tau^{*}$ is superior to 0.5 years and can reach 10 years.  If the trend mean reversion speed is superior to 1, this duration is inferior to 1 year.
\begin{figure}[H]
\begin{center}
   \includegraphics[totalheight=5cm]{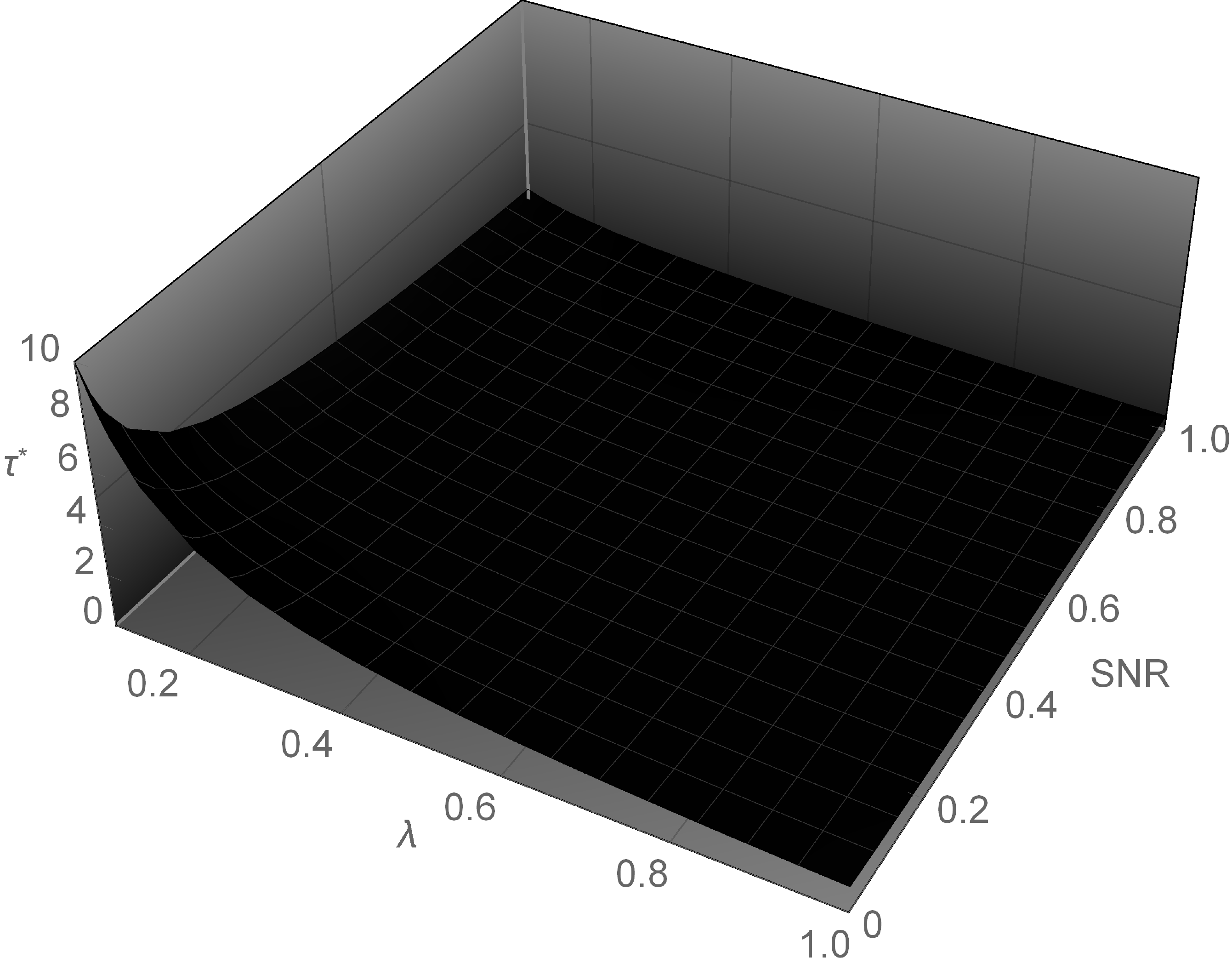}   
  \caption{Optimal duration (years) of the Kalman filter with $\lambda \in \left[ 0.1,1\right]$ }
\end{center}\label{Figu31}
\end{figure}
\begin{figure}[H]
\begin{center}
   \includegraphics[totalheight=4cm]{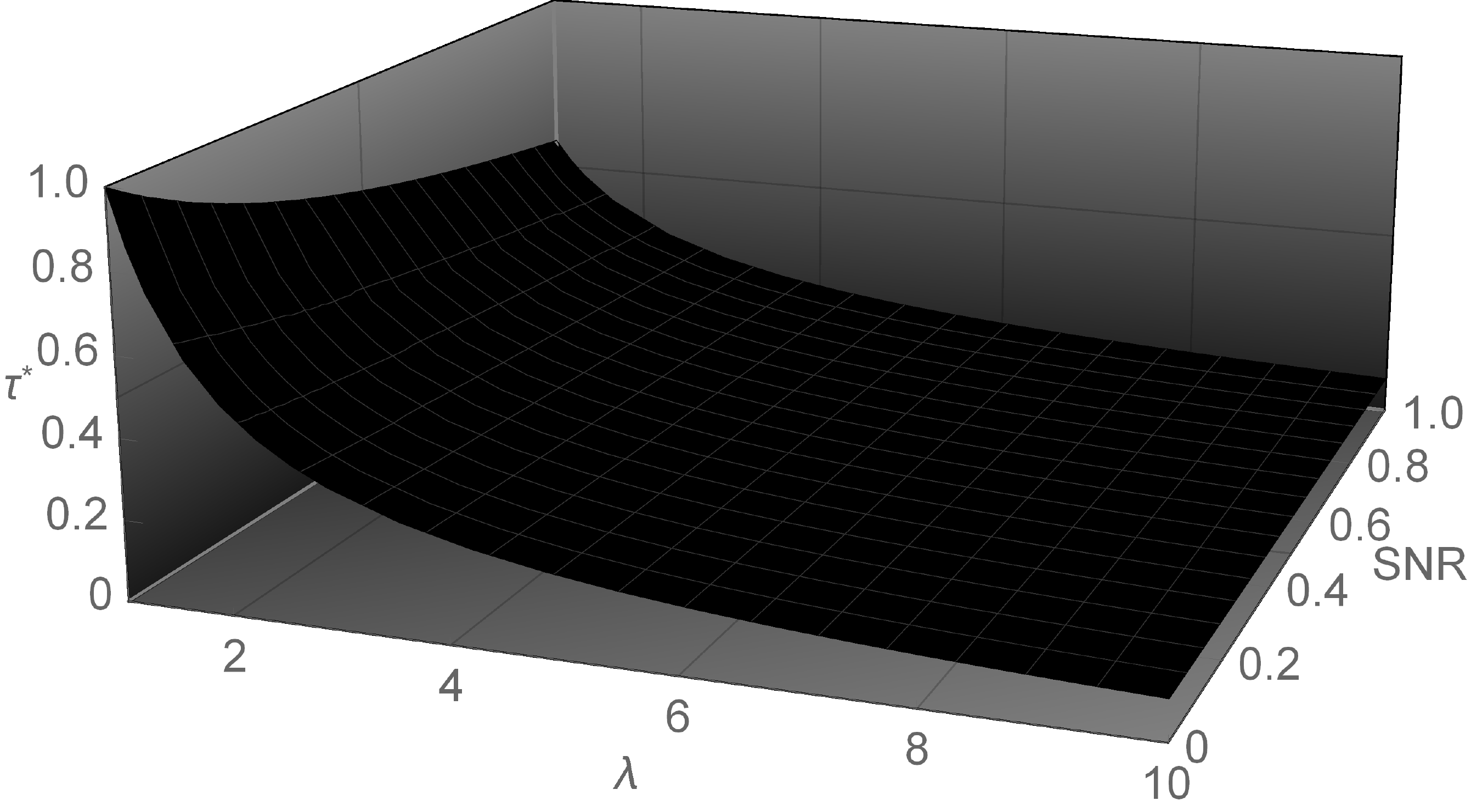}   
  \caption{Optimal duration (years) of the Kalman filter with $\lambda \in \left[ 1,10\right]$ }
\end{center}\label{Figu32}
\end{figure}
\subsubsection{Best filtering window for the optimal strategy under parameters  mis-specification}
Under parameters  mis-specification, we can also define an optimal duration using the strategy introduced in Section \ref{optimaStrat} and Proposition \ref{OptimalDurationProp}. This duration is the one maximizing the asymptotic expected logarithmic return of the optimal strategy under parameters mis-specification. This optimal window exists if and only if $\frac{\text{SNR}}{\lambda}>\frac{2m}{2-m}$. We assume that $m=1$. Then, the condition becomes $\frac{\text{SNR}}{\lambda}>2$.  The figures \hyperref[Figu33]{3} and \hyperref[Figu34]{4} represent this duration $\tau_{\text{opt}}\left(m=1\right)$ as a function of the trend mean reversion speed $\lambda$ with respectively SNR$=1$ and SNR$=0.5$.
This duration  has a similar behaviour than the optimal Kalman filter duration, except when  the trend mean reversion speed $\lambda$ tends to  $\frac{\text{SNR}}{2}$. Indeed, if $\lambda=\frac{\text{SNR}}{2}$, the condition $\frac{\text{SNR}}{\lambda}>2$ is not satisfied  and the optimal duration becomes infinite.
\begin{figure}[H]
\begin{center}
   \includegraphics[totalheight=7cm]{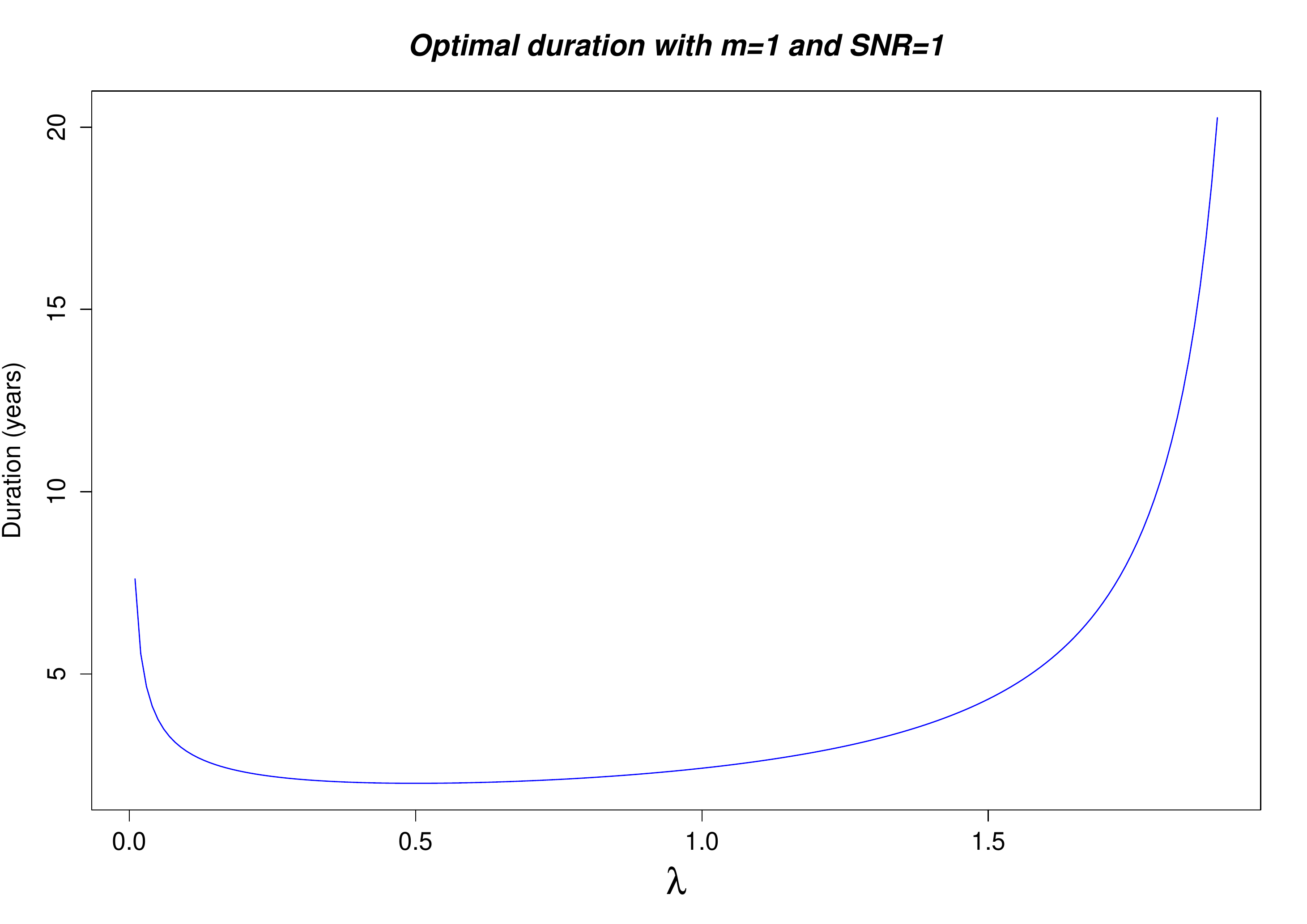}   
  \caption{Optimal duration (years) of the mis-specified filter with $m=1$ and SNR$=1$}
\end{center}\label{Figu33}
\end{figure}

\begin{figure}[H]
\begin{center}
   \includegraphics[totalheight=7cm]{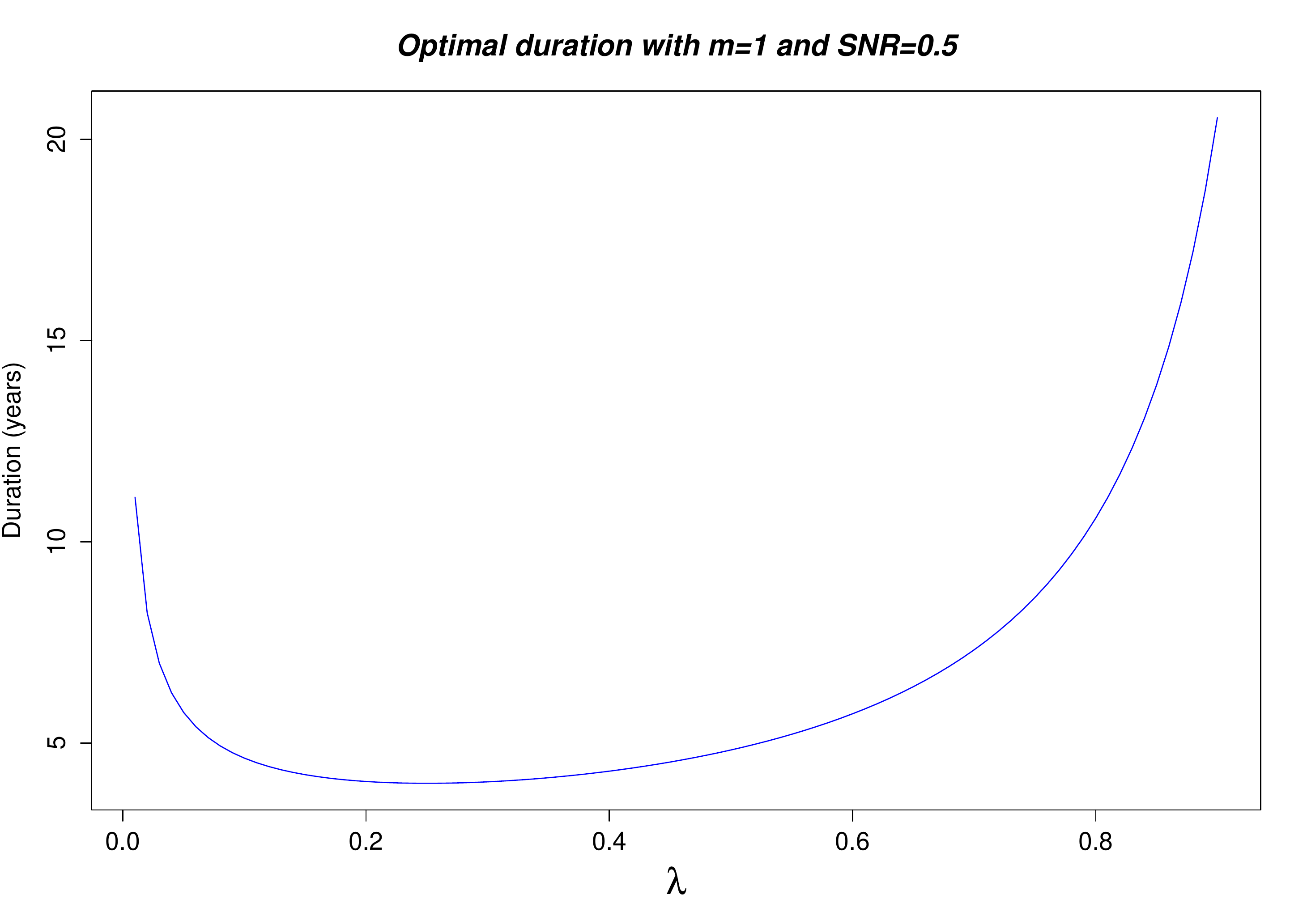}   
  \caption{Optimal duration (years) of the mis-specified filter with $m=1$ and SNR$=0.5$}
\end{center}\label{Figu34}
\end{figure}
\subsection{Robustness of the optimal strategy and of the cross moving average strategy}
\subsubsection{Stability of the performances over several theoretical regimes under constant spot volatility}
In this subsection, we consider the model \hyperref[Model3]{(1)-(2)}. Moreover, we assume that a year contains 252 days and that the risky asset volatility is equal to $\sigma_S=30\%$. We consider two trading strategies. The first one is the optimal strategy (introduced in section \ref{optimaStrat}) with a duration $\tau=252$ days ($=1$ year) and a leverage $m=1$. The second strategy is the cross moving average strategy (introduced in section \ref{CrossAvSection}) with $\left(L_1,L_2 \right)=\left(5 \text{ days},252 \text{ days}\right) $  and the following allocation:
\begin{eqnarray*}
\theta_t =-1+2\indicator{G\left(t,L_1\right)>G\left(t,L_2\right)},
\end{eqnarray*}
where $G$ is the geometric moving average defined in Equation (\ref{GeomAvrg}). Then, if the short geometric average is superior (respectively inferior) to the long geometric average, we buy (respectively sell) the risky asset. In order to compare the performance stability of these two strategies, we use the asymptotic expected logarithmic returns found in Theorems \ref{theoremExpMisspecified} and $\ref{theoremTechAnal}$.
The figures \hyperref[Figu35]{5}, \hyperref[Figu36]{6}, \hyperref[Figu37]{7} and \hyperref[Figu38]{8} represent the performances of these strategies after 100 years as a function of the trend volatility $\sigma_\mu$ respectively with $\lambda=1,2,3$ and 4. Even if the optimal strategy can provide a better performance (for example with $\lambda=1$ and $\sigma_\mu=90\%$ ), it can also provide higher losses than the cross average strategy (for example with $\lambda=4$ and $\sigma_\mu=10\%$). We can conclude with these tests that the theoretical performance of this cross average strategy is more robust than the theoretical performance of this optimal strategy. 
\begin{figure}[H]
\begin{center}
   \includegraphics[totalheight=7cm]{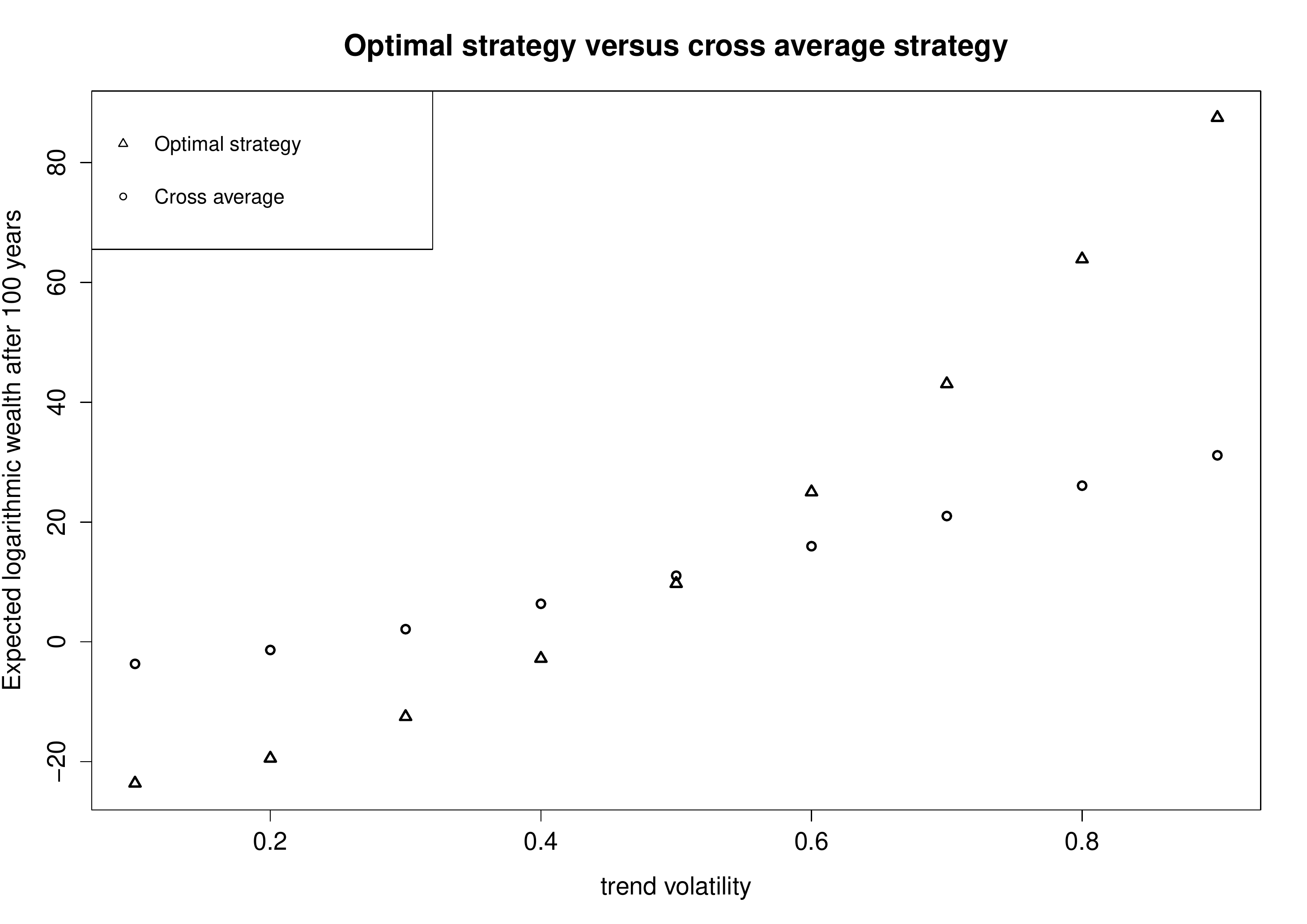}   
  \caption{The expected logarithmic returns of the optimal strategy (with $\tau=252$ days) and of the cross average strategy ($L_1=5$ days and $L_1=252$ days) as functions of $\sigma_\mu$ with $\lambda=1$, $\sigma_S=30\%$ and $T=100$ years}
\end{center}\label{Figu35}
\end{figure}
\begin{figure}[H]
\begin{center}
   \includegraphics[totalheight=7cm]{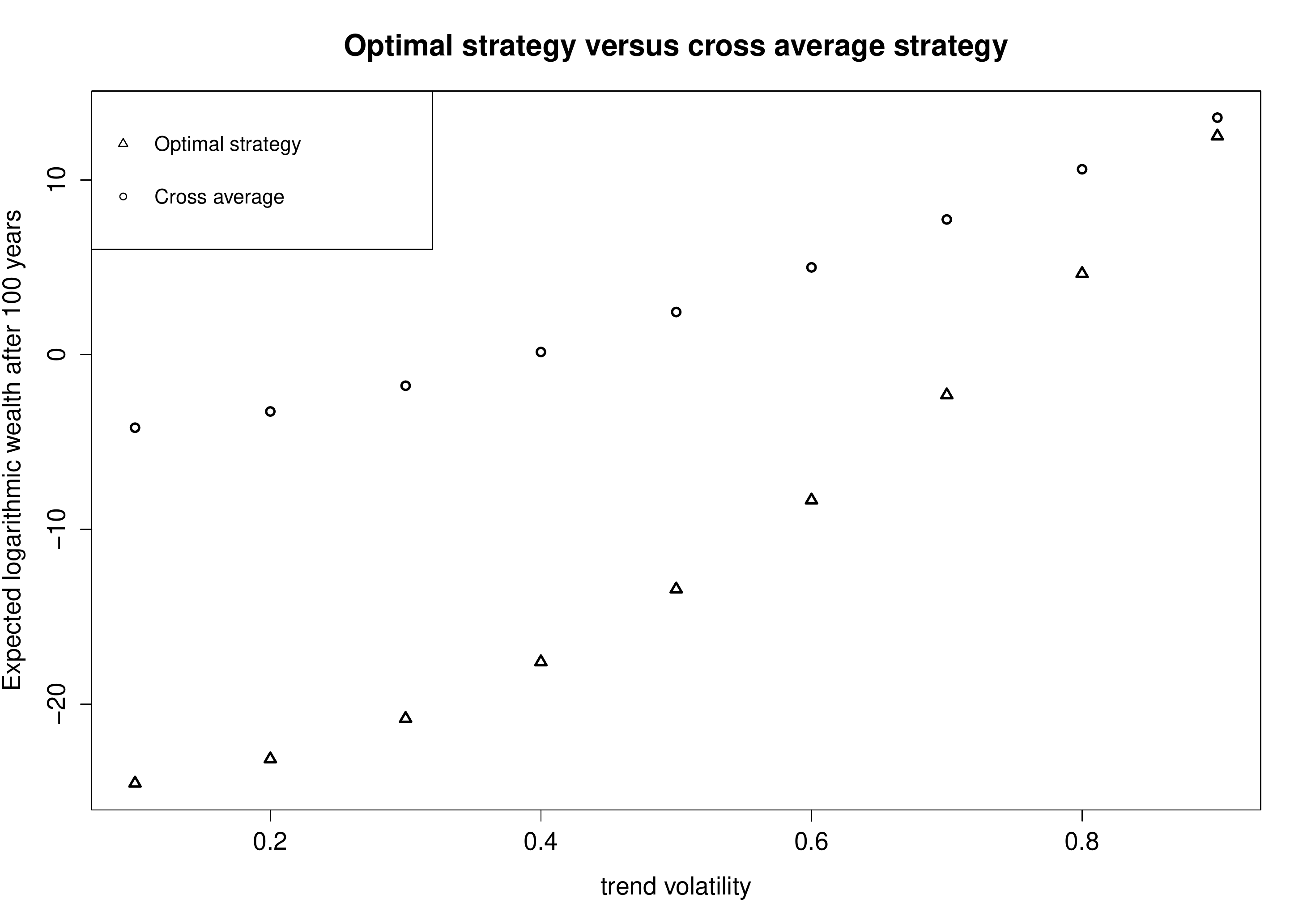}   
  \caption{The expected logarithmic returns of the optimal strategy (with $\tau=252$ days) and of the cross average strategy ($L_1=5$ days and $L_1=252$ days) as functions of $\sigma_\mu$ with $\lambda=2$, $\sigma_S=30\%$ and $T=100$ years}
\end{center}\label{Figu36}
\end{figure}
\begin{figure}[H]
\begin{center}
   \includegraphics[totalheight=7cm]{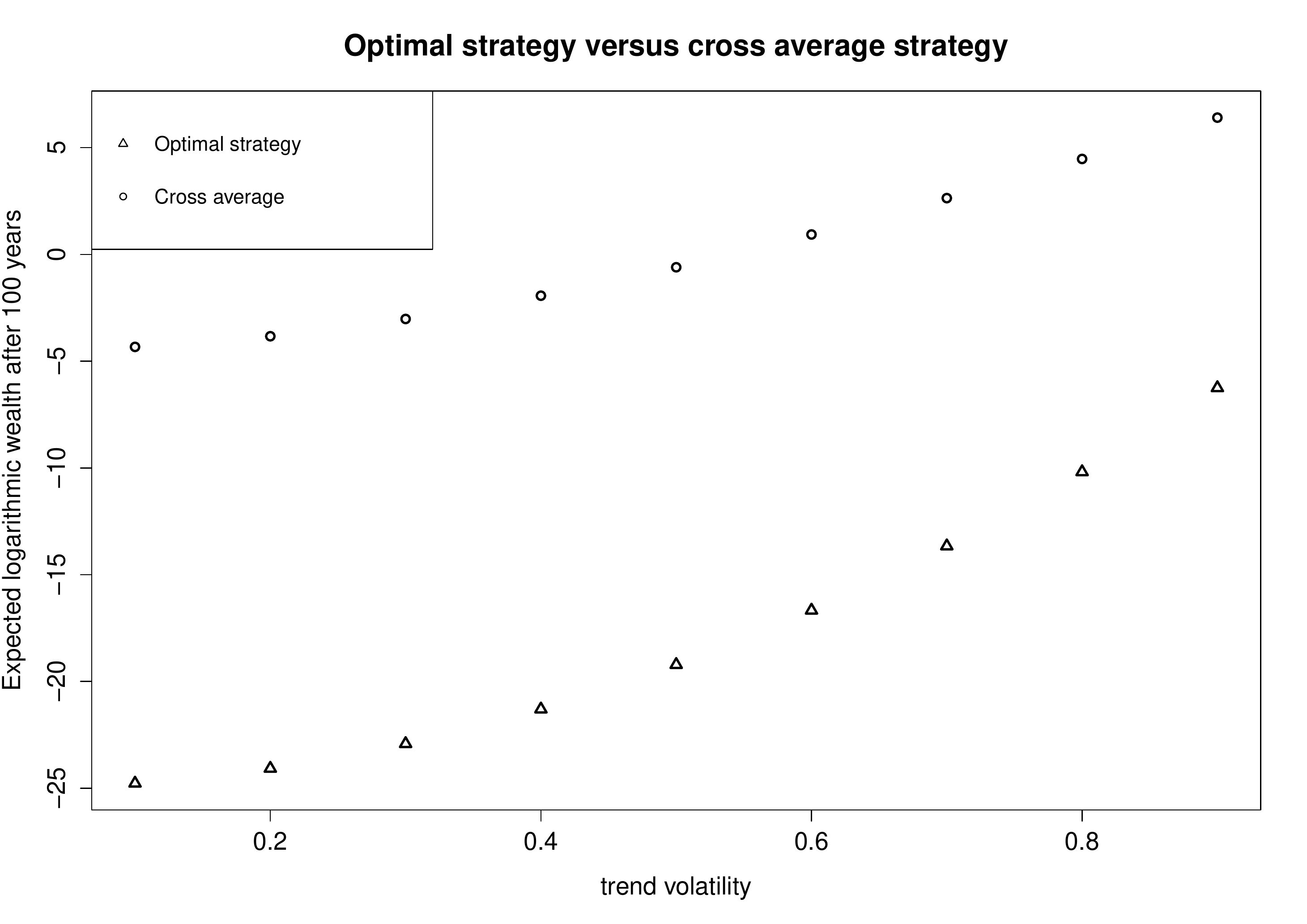}   
  \caption{The expected logarithmic returns of the optimal strategy (with $\tau=252$ days) and of the cross average strategy ($L_1=5$ days and $L_1=252$ days) as functions of $\sigma_\mu$ with $\lambda=3$, $\sigma_S=30\%$ and $T=100$ years}
\end{center}\label{Figu37}
\end{figure}
\begin{figure}[H]
\begin{center}
   \includegraphics[totalheight=7cm]{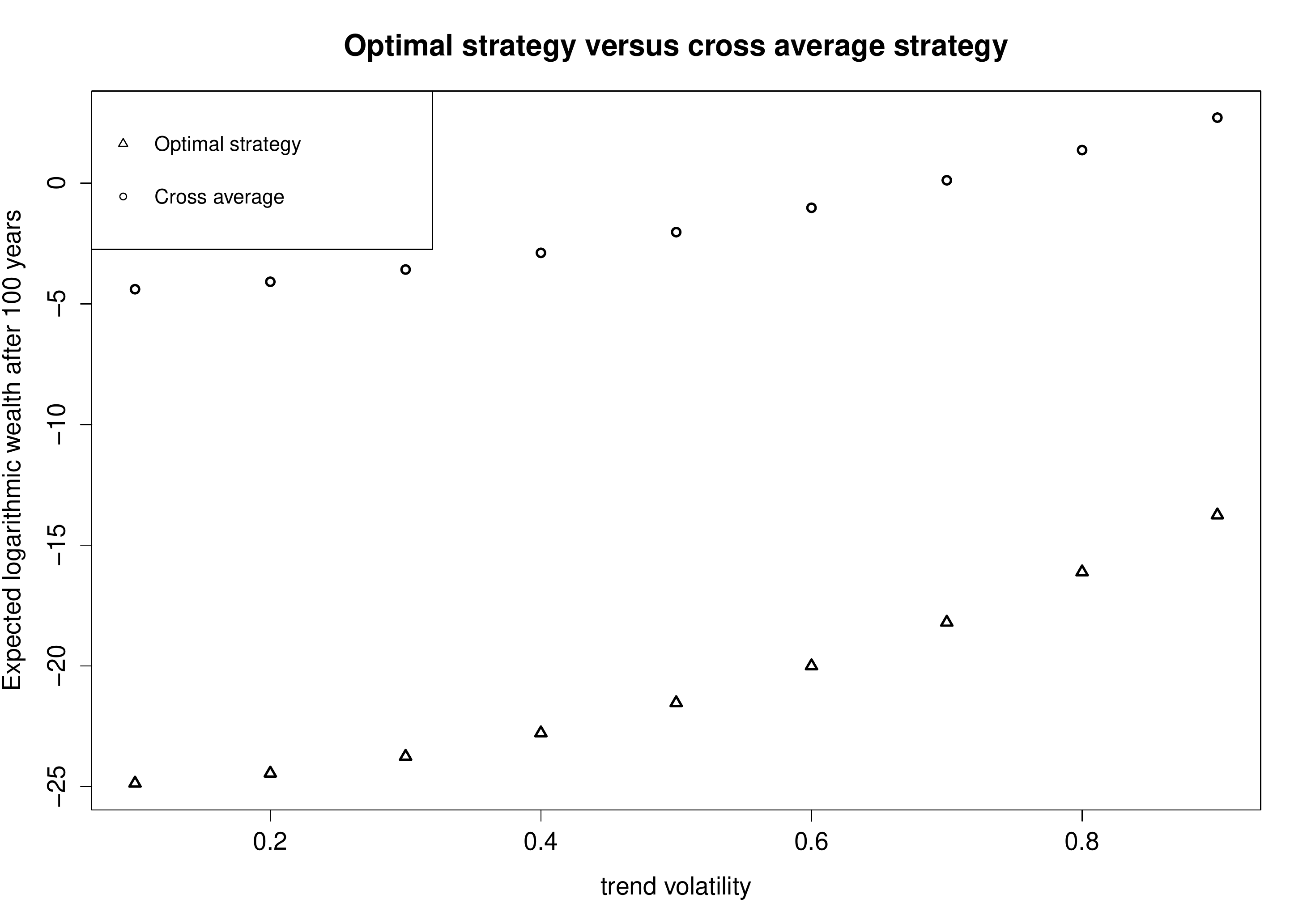}   
  \caption{The expected logarithmic returns of the optimal strategy (with $\tau=252$ days) and of the cross average strategy ($L_1=5$ days and $L_1=252$ days) as functions of $\sigma_\mu$ with $\lambda=4$, $\sigma_S=30\%$ and $T=100$ years}
\end{center}\label{Figu38}
\end{figure}
\subsubsection{Stability of the performances over several theoretical regimes under Heston's stochastic volatility model}
\paragraph{Model and optimal strategy}
The aim of this subsection is to check if the cross average strategy is more robust than the optimal trading strategy under Heston's stochastic volatility model (see \cite{Heston1} or \cite{Heston2} for details). To this end, consider a financial market living on a stochastic basis $( \Omega , \mathcal{G} ,\mathbf{G}, \mathbb{P} )$, where  $\mathbf{G}=\left\lbrace  \mathcal{G}_{t}, t \geqslant 0 \right\rbrace$ is the natural filtration associated to a three-dimensional Wiener process $(W^{S},W^{\mu},W^{V})$, and $\mathbb{P}$ is the objective probability measure. The dynamics of the risky asset $S$ is given by
\begin{eqnarray}\label{Model3VolSto}
\frac{dS_{t}}{S_{t}}&=&\mu_{t}dt+\sqrt{V_{t}}dW_{t}^{S},\\
d\mu_{t}&=&-\lambda\mu_{t}dt+\sigma_{\mu}dW_{t}^{\mu}\label{Model23VolSto},\\
dV_{t}&=&\alpha\left(V_{\infty}-V_{t} \right) dt+ \epsilon \sqrt{V_{t}} dW_{t}^{V}
\end{eqnarray}
with $\mu_{0}=0$, $V_{0}>0$, $d\left\langle W^{S},W^{\mu} \right\rangle_t=0$, and $d\left\langle W^{S},W^{V} \right\rangle_t=\rho dt$. We also assume that $\left( \lambda,\sigma_{\mu}\right) \in  \mathbb{R}_+^{*}\times\mathbb{R}_+^{*}$ and that $2 k V_{\infty}>\epsilon$ (in this case, the variance $V$  cannot reach zero and is always positive, see \cite{coxHeston} for details). Denote by $\mathbf{G}^{S}=\left\lbrace  \mathcal{G}_{t}^S \right\rbrace$ be the natural filtration associated to the price process $S$. In this case, the process $V$ is $G^S$-adapted. Now, assume that the agent aims to maximize his expected logarithmic wealth (on an admissible domain $\mathcal{A}$, which represents all the $\mathbf{G}^{S}$-progressive and measurable processes). In this case, his optimal portfolio is given by (see \cite{OptimalStrat}):
\begin{eqnarray*}
\frac{dP_{t}}{P_{t}}&=& \frac{E\left[  \mu_t | \mathcal{G}^S_t\right]}{V_{t}} \frac{dS_{t}}{S_{t}},\\
P_0&=& x.
\end{eqnarray*}
Let $\delta$ be a discrete time step, and denote by the subscript $k$ the value of a process at time $t_k = k \delta$. Using the scheme that produces the smallest discretization bias for the variance process (see \cite{Scheme} for details), the discrete time model is:
\begin{eqnarray}\label{Equation KalmanVolSto}
y_{k+1}=\frac{S_{k+1}-S_{k}}{\delta S_{k}}&=& \mu_{k+1} + u_{k+1},\\
\mu_{k+1}&=&e^{-\lambda \delta} \mu_{k}+v_{k},\label{Equation Kalman2VolSto}\\
V_{k+1}&=& V_{k}+\alpha\left(V_{\infty}-V_{k}^{+} \right) \delta +\epsilon \sqrt{V_{k}^{+}} z_{k} \label{Equation Kalman3VolSto}
\end{eqnarray} 
where $x^{+}=\max\left( 0,x\right), $ $u_{k+1}\sim \mathcal{N}\left( 0,\frac{V_{k}}{\delta}\right)$, $v_{k}\sim \mathcal{N}\left( 0,\frac{\sigma_{\mu}^{2}}{2\lambda}\left( 1- e^{-2 \lambda \delta} \right) \right)$ and $z_{k}\sim \mathcal{N}\left( 0,\delta\right)$. 
\paragraph{Monte Carlo simulations}
In this section, Monte Carlo simulations are used to check if the cross average strategy is more robust than the optimal trading strategy under Heston's stochastic volatility model. To this end, we consider the discrete model (\ref{Equation KalmanVolSto})-(\ref{Equation Kalman2VolSto})-(\ref{Equation Kalman3VolSto}) and we assume that $\alpha=4$ (quarterly mean-reversion of the variance process), that $\epsilon=5\%$, that $V_{\infty}=V_{0}=0.3^2$ (which means an initial and a long horizon spot volatility equal to $30\%$) and that $\rho=-60\%$ (when the spot decreases, the volatility increases). Moreover, we consider an investment horizon equal to 50 years and $\delta=1/252$ (which means that that a year contains 252 days and that each allocation is made daily). With this set-up, we consider several trend regimes, we simulate $M$ paths of the risky asset over 50 years and we implement two strategies:
\begin{enumerate}
\item The discrete time version of the optimal strategy presented above. Since the process $V$ is $G^S$-adapted, $V_k$ is observable at time $t_k$ and the conditional expectation of the trend is tractable with the non stationary discrete time Kalman filter (see \cite{KalmanBucy}). We assume that the agent thinks that the parameters are equal to $\lambda^a=1$ and $\sigma_{\mu}^a=90\%$ when he uses the Kalman filter.
\item The cross moving average strategy (introduced in section \ref{CrossAvSection}) with $\left(L_1,L_2 \right)=\left(5 \text{ days},252 \text{ days}\right) $  and the following allocation:
\begin{eqnarray*}
\theta_k =-1+2\indicator{G^d\left(k,L_1\right)>G^d\left(k,L_2\right)},
\end{eqnarray*}
where $G^d\left(k,L\right)$ is the discrete geometric moving average computed on the last $L$ values of $S$.
\end{enumerate}
The figures \hyperref[Figu39]{9} and \hyperref[Figu40]{10} represent the estimated performances of these strategies after 50 years as a function of the trend volatility $\sigma_\mu$ with $M=10000$ and respectively with $\lambda=1$ and 2. These results confirm that the performance of the  cross average strategy is less sensitive to a trend regime variation than the performance optimal trading strategy with parameters mis-specification.
Moreover, The figures \hyperref[Figu41]{11}, \hyperref[Figu42]{12}, \hyperref[Figu43]{13} and \hyperref[Figu44]{14} represent the empirical distribution of the logarithmic return of these strategies after 50 years over $M=10000$ paths for different configurations. These figures show that, even with a good calibration, the logarithmic return of the cross average strategy is less dispersed than the logarithmic return of the optimal strategy. Then the cross average strategy is more robust than the optimal strategy.
\begin{figure}[H]
\begin{center}
   \includegraphics[totalheight=7cm]{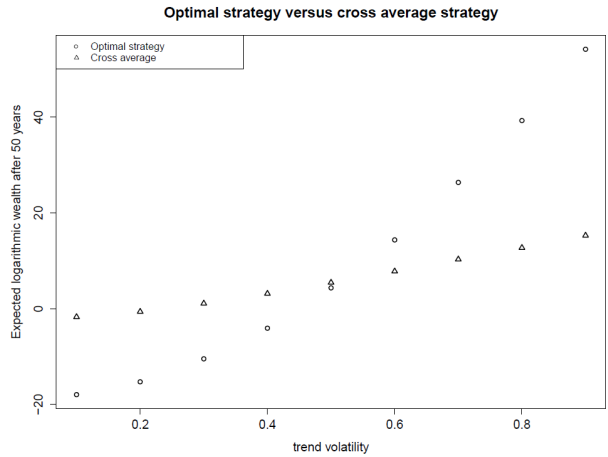}   
  \caption{The expected logarithmic returns of the optimal strategy (with $\lambda^a=1$ and $\sigma_{\mu}^a=90\%$) and of the cross average strategy ($L_1=5$ days and $L_1=252$ days) as functions of $\sigma_\mu$ with $M=10000$, $\lambda=1$, $\alpha=4$, $\epsilon=5\%$, $V_{\infty}=V_{0}=0.3^2$, $\rho=-60\%$  and $T=50$ years}
\end{center}\label{Figu39}
\end{figure}
\begin{figure}[H]
\begin{center}
   \includegraphics[totalheight=7cm]{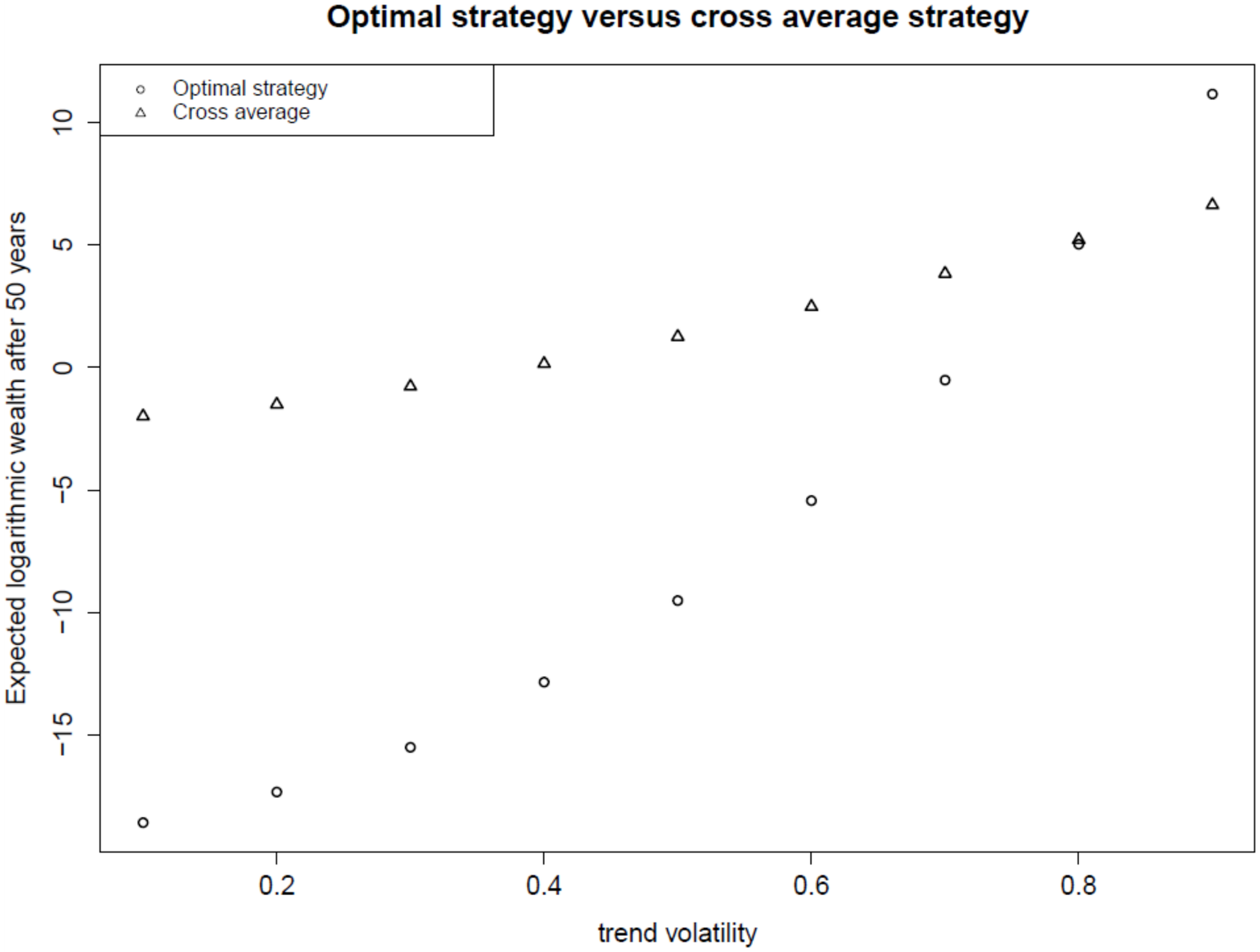}   
  \caption{The expected logarithmic returns of the optimal strategy (with $\lambda^a=1$ and $\sigma_{\mu}^a=90\%$) and of the cross average strategy ($L_1=5$ days and $L_1=252$ days)as functions of $\sigma_\mu$ with $M=10000$, $\lambda=2$, $\alpha=4$, $\epsilon=5\%$, $V_{\infty}=V_{0}=0.3^2$, $\rho=-60\%$  and $T=50$ years}
\end{center}\label{Figu40}
\end{figure}

\begin{figure}[H]
\begin{center}
   \includegraphics[totalheight=7.cm]{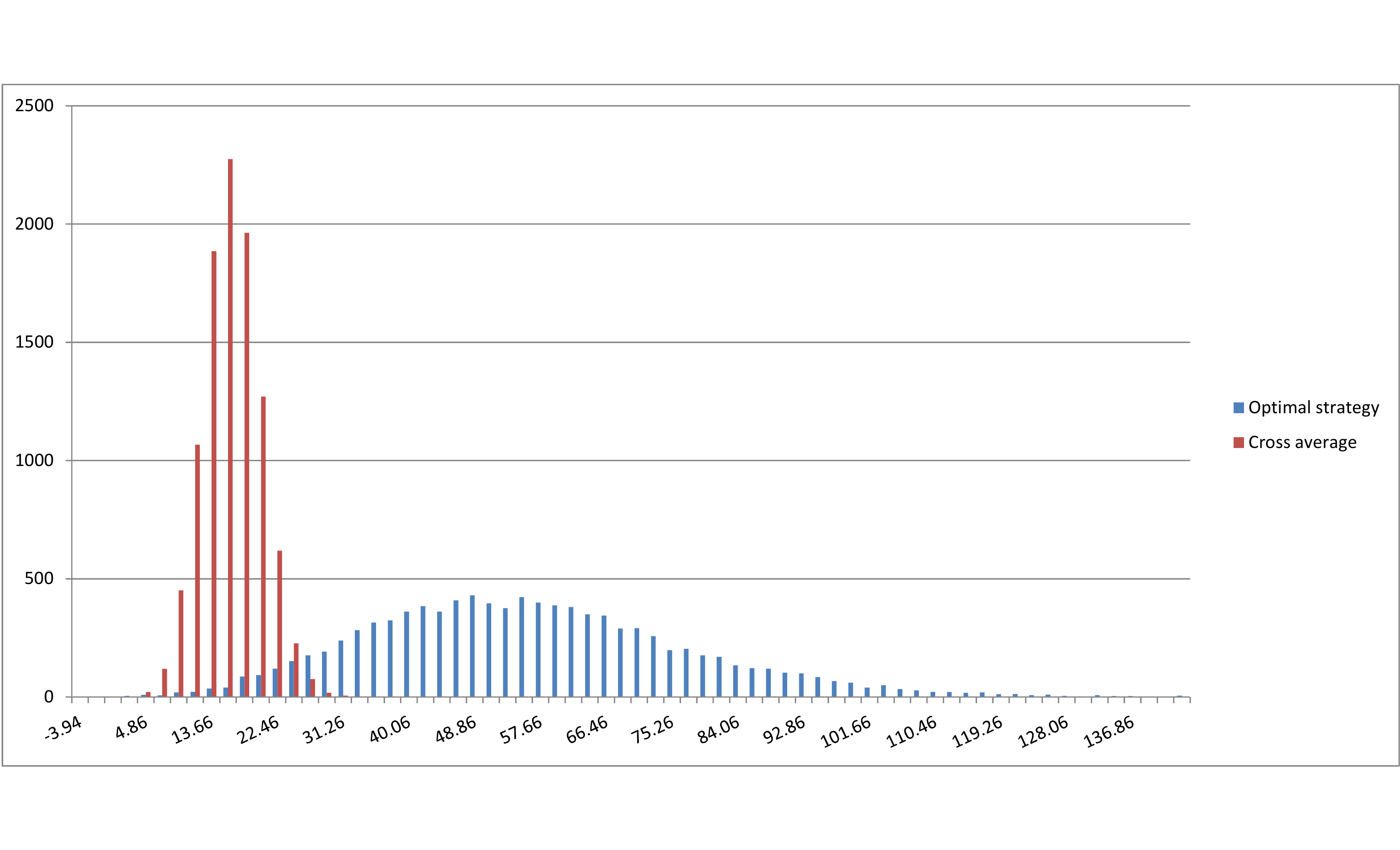}   
  \caption{Empirical distribution of the logarithmic return of the optimal strategy (with $\lambda^a=1$ and $\sigma_{\mu}^a=90\%$) and of the cross average strategy ($L_1=5$ days and $L_1=252$ days) with  $M=10000$, $\sigma_\mu=90\%$ , $\lambda=1$, $\alpha=4$, $\epsilon=5\%$, $V_{\infty}=V_{0}=0.3^2$, $\rho=-60\%$  and $T=50$ years}
\end{center}\label{Figu41}
\end{figure}
\begin{figure}[H]
\begin{center}
   \includegraphics[totalheight=7.cm]{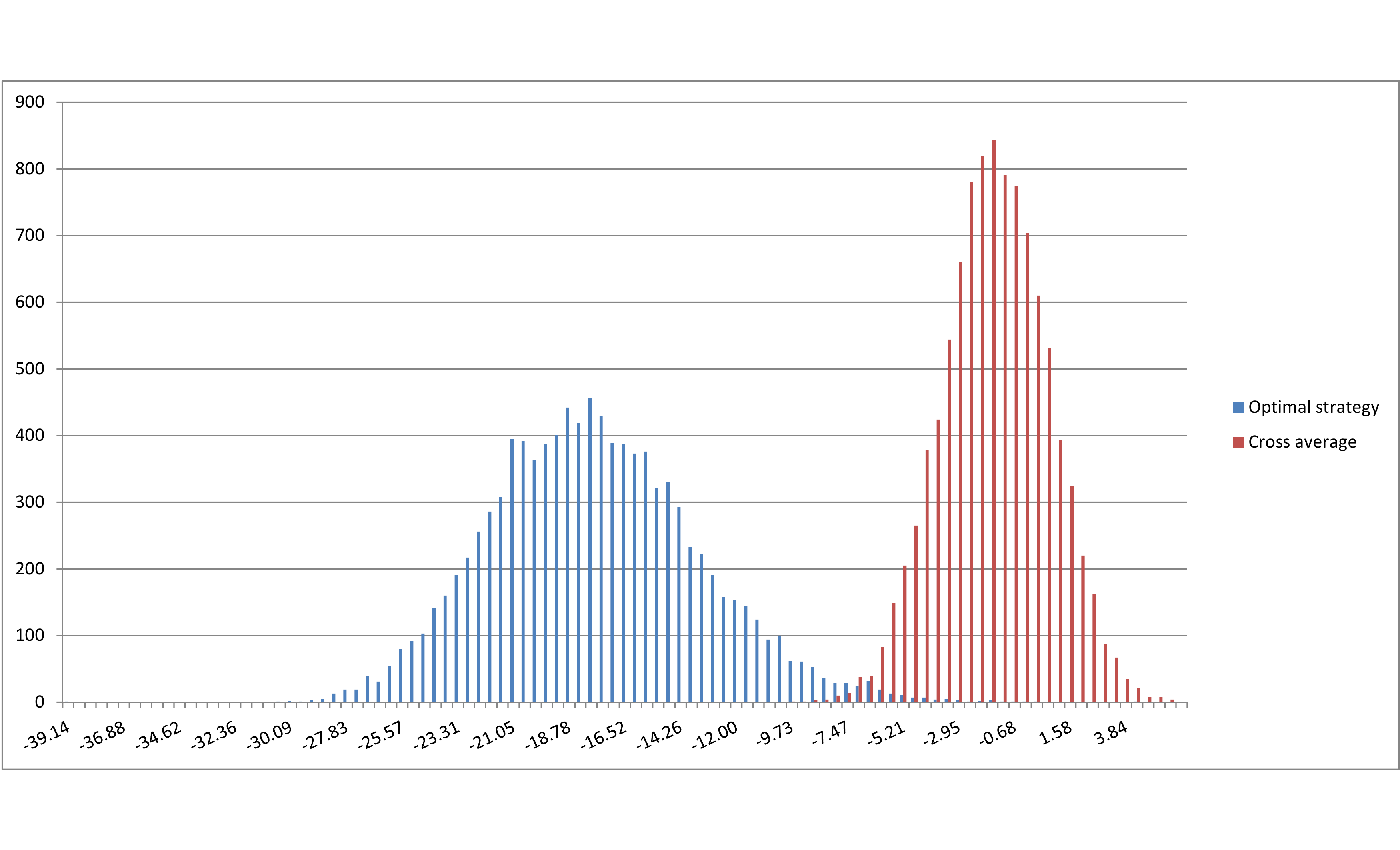}   
  \caption{Empirical distribution of the expected logarithmic return of the optimal strategy (with $\lambda^a=1$ and $\sigma_{\mu}^a=90\%$) and of the cross average strategy ($L_1=5$ days and $L_1=252$ days) with  $M=10000$, $\sigma_\mu=10\%$ , $\lambda=1$, $\alpha=4$, $\epsilon=5\%$, $V_{\infty}=V_{0}=0.3^2$, $\rho=-60\%$  and $T=50$ years}
\end{center}\label{Figu42}
\end{figure}
\begin{figure}[H]
\begin{center}
   \includegraphics[totalheight=7.cm]{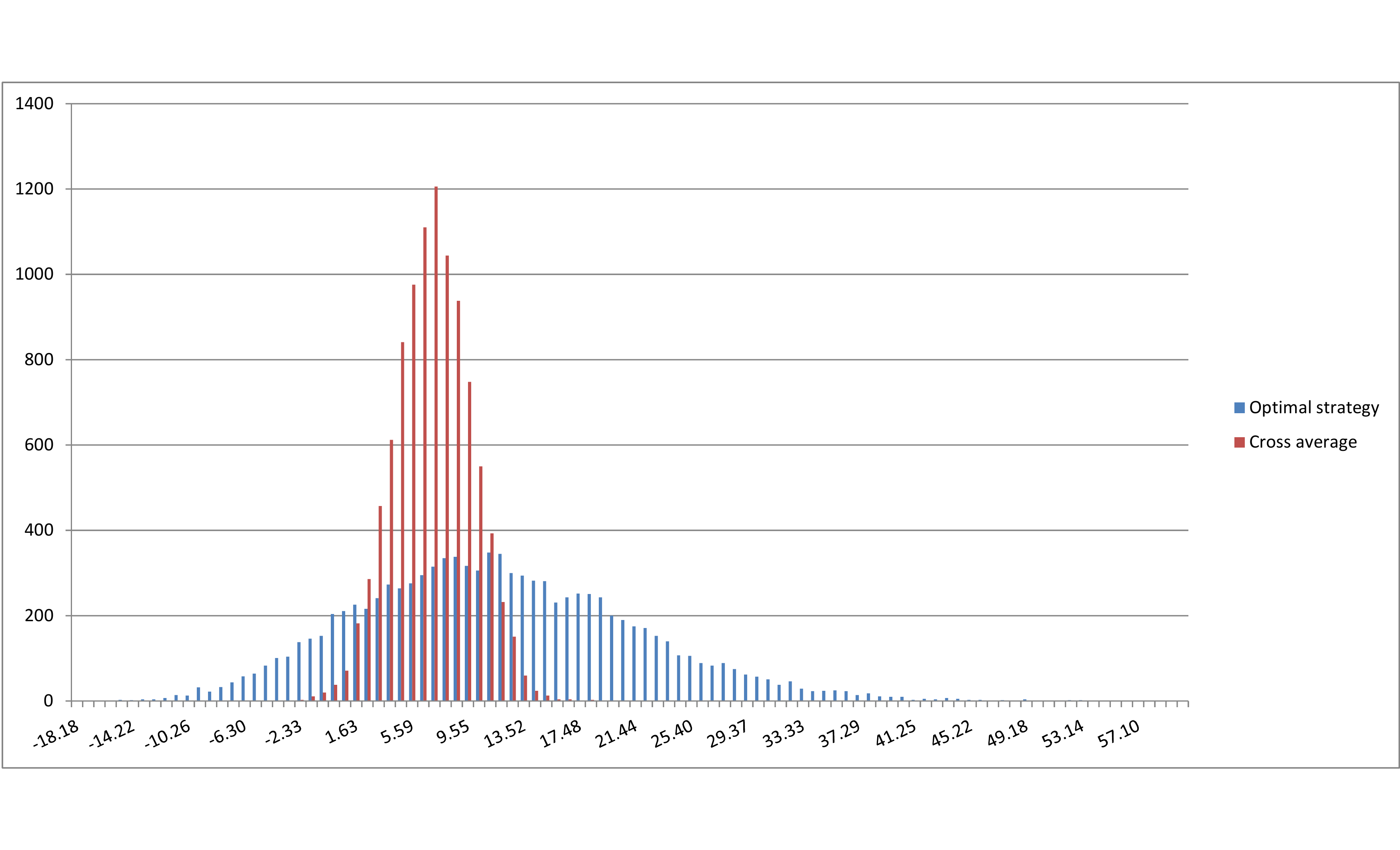}   
  \caption{Empirical distribution of the expected logarithmic return of the optimal strategy (with $\lambda^a=1$ and $\sigma_{\mu}^a=90\%$) and of the cross average strategy ($L_1=5$ days and $L_1=252$ days) with  $M=10000$, $\sigma_\mu=90\%$ , $\lambda=2$,  $\alpha=4$, $\epsilon=5\%$, $V_{\infty}=V_{0}=0.3^2$, $\rho=-60\%$  and $T=50$ years}
\end{center}\label{Figu43}
\end{figure}
\begin{figure}[H]
\begin{center}
   \includegraphics[totalheight=7.cm]{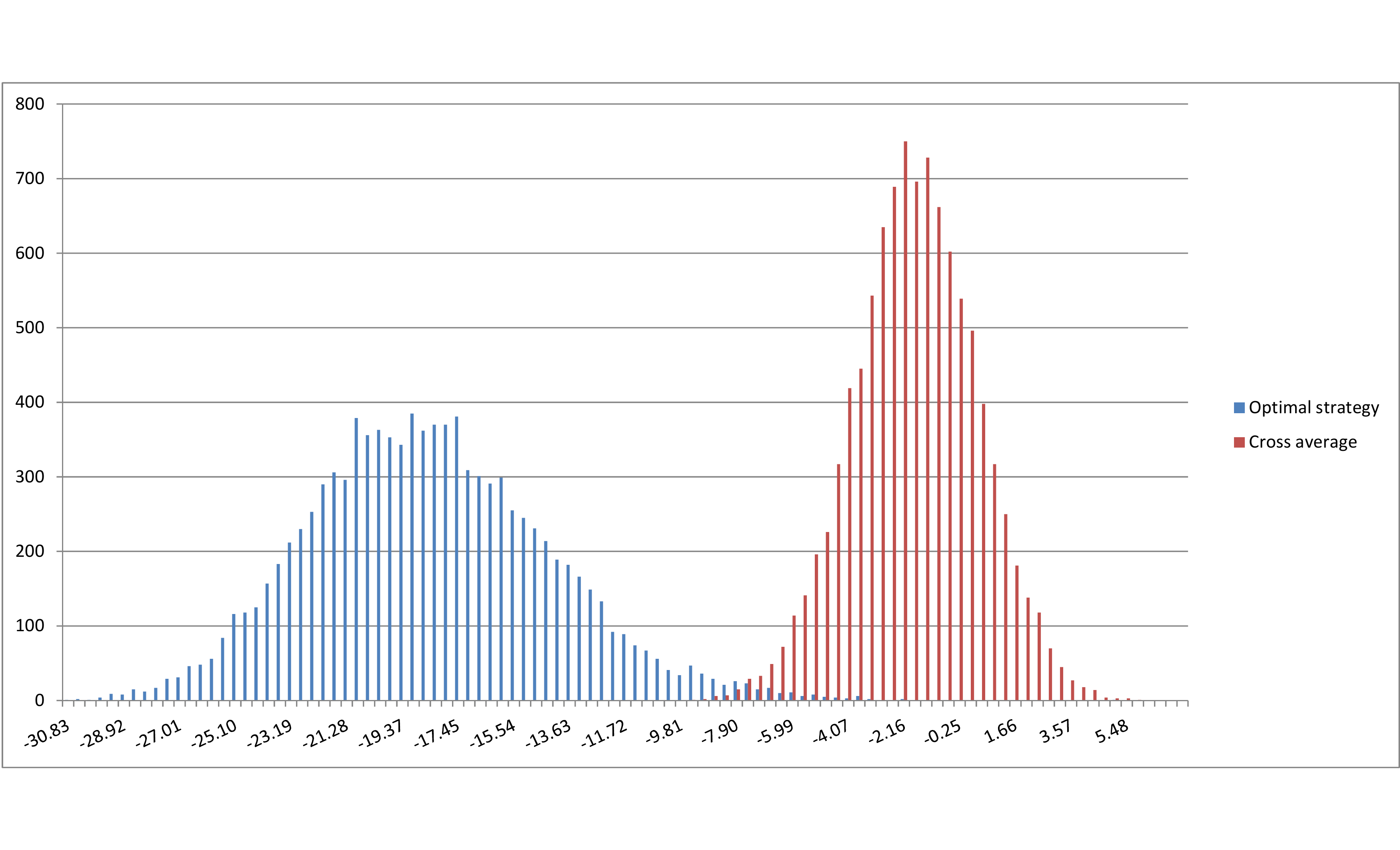}   
  \caption{Empirical distribution of the expected logarithmic return of the optimal strategy (with $\lambda^a=1$ and $\sigma_{\mu}^a=90\%$) and of the cross average strategy ($L_1=5$ days and $L_1=252$ days) with  $M=10000$, $\sigma_\mu=10\%$ , $\lambda=2$, $\alpha=4$, $\epsilon=5\%$, $V_{\infty}=V_{0}=0.3^2$, $\rho=-60\%$  and $T=50$ years}
\end{center}\label{Figu44}
\end{figure}
\subsubsection{Tests on real data}
Here we test the performances of the two previous strategies on real data. 
The performance of a strategy is evaluated with the annualised Sharpe ratio indicator (see \cite{Sharpe}) on relative daily returns. 
For the optimal strategy, we assume that $\tau=252$ business days, that $m=0.1$ (it has no impact on the Sharpe ratio indicator), and that the volatility $\sigma_S$ is computed over all the data and used since the beginning of the backtest. For the cross moving average strategy, we keep the same assumptions than the previous section (a  window of x days is replaced by a window of x business days).  
The universe of underlyings are nine stock indexes (the SP 500 Index, the Dow Jones Industrial average Index, the Nasdaq Index, the Euro Stoxx 50 Index, the Cac 40 Index, the Dax Index, the Nikkei 225 Index, the Ftse 100 Index and the Asx 200 Index) and nine forex exchange rates (EUR/CNY, EUR/USD, EUR/JPY, EUR/GBP, EUR/CHF, EUR/MYR, EUR/BRL, EUR/AUD and EUR/ZAR). The  period considered is from 12/22/1999 to 2/1/2015. In this test, we assume that these indexes are tradable and that the traded price is given by the closing price of the underlying. The backtest is done without transaction costs. For each strategy, the reallocation is made on a daily frequency. The figure \hyperref[Figu45]{15} gives the measured annualised Sharpe ratio of the 18 underlyings for each strategy. We observe that, even with an over-fitted volatility for the optimal strategy, the cross moving average strategy outperforms the optimal strategy except for the EUR/BRL. 
\begin{figure}[H]
\begin{center}
   \includegraphics[totalheight=10cm]{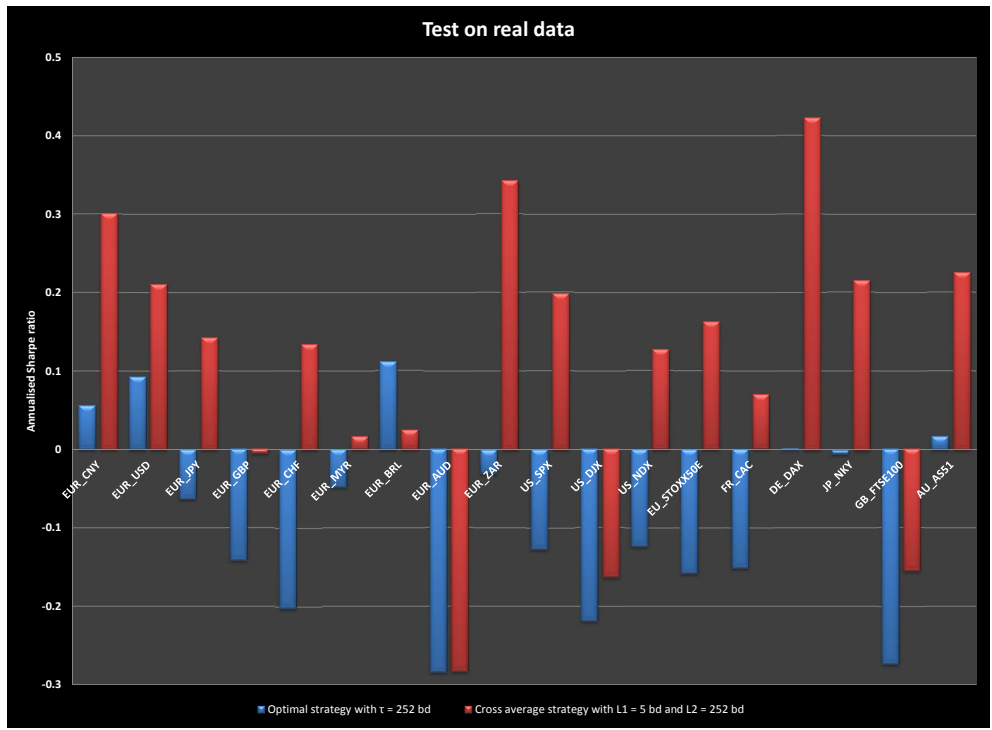}   
  \caption{Sharpe ratio of the optimal strategy (with $\tau=252$ bd) and of the cross average strategy ($L_1=5$ bd and $L_1=252$ bd) on real data from 12/22/1999 to 2/1/2015}
\end{center}\label{Figu45}
\end{figure}
\newpage
\section{Conclusion}
The present work quantifies the performances of the optimal strategy under parameters mis-specification and of a cross moving average strategy using geometric moving averages with a model based on an unobserved mean-reverting diffusion. 

For the optimal strategy, we show that the asymptotic expectation of the logarithmic returns is a an increasing function of the signal-to-noise ratio and a decreasing function of the trend mean reversion speed.

We find that, under parameters mis-specification, the performance can be positive under some conditions on the model and strategy parameters. Under the same assumptions, we show the existence of an optimal duration which is equal to the Kalman filter duration if the parameters are well-specified. 

For the cross moving average strategy, we also provide the asymptotic logarithmic return of this strategy as a function of the model parameters.

Moreover, the simulations show that, with a model based on an unobserved mean-reverting diffusion, and even with a stochastic volatility, technical analysis investment is more robust than the optimal trading strategy. The empirical tests on real data confirm this conclusion. 

\newpage
\bibliographystyle{authordate1}
\bibliographystyle{apalike}
\bibliography{BibliFinalPhd}{}

\end{document}